\def\bar#1{\overline{#1}}
\def\inv{^{\raise.15ex\hbox{${\scriptscriptstyle -}$}\kern-.05em 1}}
\def\lbar{{\lower.35ex\hbox{$\mathchar'26$}\mkern-10mu\lambda}} 
\let\p=\partial
\def\a{\alpha'}
\def\d{\textrm{d}}
\def\End{\textrm{End}}
\def\ker{\textrm{ker}}
\def\bp{\bar{\partial}}
\def\tr{\textrm{tr}}
\def\OO{\mathcal{O}}
\def\Q{\mathcal{Q}}
\def\H{\mathcal{H}}
\def\P{\mathcal{P}}
\def\H{\mathcal{H}}
\def\M{\mathcal{M}}
\newtheorem{Theorem}{Theorem}
\newtheorem{Conjecture}{Conjecture}
\newtheorem{Proposition}{Proposition}
\theoremstyle{definition}
\begin{document}



\title{\rm Connections, Field Redefinitions and Heterotic Supergravity\\}

\author[a]{Xenia de la Ossa}
\author[b,c,d,e]{and Eirik E.~Svanes}

\affiliation[a]{Mathematical Institute, University of Oxford, Andrew Wiles Building\\
Radcliffe Observatory Quarter, Woodstock Road, Oxford OX2 6GG, UK\\}
\affiliation[b]{Rudolf Peierls Centre for Theoretical Physics, University of Oxford\\ 
1 Keble Road, Oxford OX1 3NP, UK}
\affiliation[c]{Sorbonne Universités, UPMC Univ Paris 06, UMR 7589, LPTHE, F-75005, Paris, France}
\affiliation[d]{CNRS, UMR 7589, LPTHE, F-75005, Paris, France}
\affiliation[e]{Sorbonne Universités, Institut Lagrange de Paris, 98 bis Bd Arago, 75014 Paris, France\\}

\emailAdd{delaossa@maths.ox.ac.uk}
\emailAdd{esvanes@lpthe.jussieu.fr}

\null\vskip10pt

\abstract{We study heterotic supergravity at $\OO(\a)$, first described in detail in 1989 by Bergshoeff and de Roo. In particular, we discuss an ambiguity of a connection choice on the tangent bundle. It is well known that at $\OO(\a)$ the Hull connection gives a consistent supergravity theory with supersymmetry transformations given in the usual way. We consider deformations of this connection corresponding to field redefinitions, and the necessary corrections to the supersymmetry transformations. We are interested in the moduli space of such field redefinitions which allow for supersymmetric solutions to the equations of motion. We show that for solutions on $M_4\times X$, where $M_4$ is Minkowski and $X$ is compact, the moduli space of infinitesimal field redefinitions is given by $H^{(0,1)}(X,\End(TX))$. This space corresponds to infinitesimally close connections for which the equations of motion are satisfied. The setup suggests a symmetry between the gauge connection and the tangent bundle connection, as also employed by Bergshoeff and de Roo. We argue that this symmetry should be kept to higher orders in $\a$, and propose a natural choice for the corresponding tangent bundle connection used in curvature computations. In particular, the Hull connection should be corrected at second and higher orders in $\a$ from this point of view.}

\keywords{Heterotic Supergravity, $\a$-expansion, Supersymmetry,\\ Field Redefinitions.}

\maketitle

\newpage

\section{Introduction}

This paper is a continuation of \cite{delaOssa:2014cia} where we studied the infinitesimal moduli space of heterotic supergravity, and in particular the Strominger system \cite{Strominger1986, Hull1986357, Lust:1986ix}. Moduli of the Strominger system where also studied in \cite{Anderson:2014xha}.

In this paper we will shed more light on an ambiguity which appeared in \cite{delaOssa:2014cia} concerning moduli of $TX$ as a holomorphic bundle defined by a holomorphic connection $\nabla$. We argued that these moduli could not be physical, and formulated a possible interpretation for their appearance, which we elaborate on in this paper. 

Ambiguities concerning the connection $\nabla$ have been discussed extensively in the literature before, both from a sigma model perspective  \cite{Hull1986187, Sen1986289, 0264-9381-4-6-027, Melnikov:2012cv, Melnikov:2012nm}, where a change of this connection has been shown to correspond to a field redefinition, or from the supergravity point of view  \cite{Hull198651, Hull1986357, Becker:2009df, Melnikov:2014ywa}, where a change of connection choice has been shown to correspond to a change of regularisation scheme in the effective action. We will review some of these results and extend them to higher orders in $\a$. Heterotic supergravity has also been considered at higher orders in $\a$ before \cite{Witten:1986kg, Bergshoeff:1988nn, Bergshoeff1989439, Gillard:2003jh, Anguelova:2010ed},\footnote{More recently, $\a$-corrections in heterotic supergravity have been considered in the context of generalised geometry and double field theory \cite{Bedoya:2014pma, Hohm:2014eba, Hohm:2014xsa, Coimbra:2014qaa}, which is more closely related to the structure presented in \cite{delaOssa:2014cia}.} and we will review and extend some of these results. In particular, we address the ambiguity concerning a connection choice on the tangent bundle in the higher order theory. 

We begin in section \ref{sec:FirstOrder} with a short review of first order heterotic supergravity as first written down in \cite{Bergshoeff:1988nn, Bergshoeff1989439}. We discuss the connection choice $\nabla$ on the tangent bundle $TX$ needed for the supersymmetry equations and equations of motion to be compatible. This leads to the an instanton condition on $\nabla$ \cite{Ivanov:2009rh, Martelli:2010jx, delaOssa:2014cia}
\begin{equation*}
R_{mn}\Gamma^{mn}\eta=0\:,
\end{equation*}
where $R_{mn}$ is the curvature two-form of $\nabla$ and $\eta$ is the spinor parametrising supersymmetry on $X$. This condition has often been applied in the literature when constructing heterotic vacua, see e.g. \cite{Harland:2011zs, Gemmer:2012pp, Chatzistavrakidis:2012qb, Klaput:2012vv, Haupt:2014ufa}. The instanton condition has an associated infinitesimal moduli space
\begin{equation}
\label{eq:Mnabla}
T\M_\nabla\cong H^{(0,1)}(X,\End(TX))\:.
\end{equation}
which we considered in \cite{delaOssa:2014cia}. These moduli cannot be physical, and the main purpose of this paper is an attempt to understand their appearance. 

In order to have a consistent supergravity at $\OO(\a)$, $\nabla$ must be taken to be the Hull connection \cite{Hull1986357, Bergshoeff1989439}. A deformation of this connection is equivalent to a field redefinition or a change of the regularisation scheme in the effective action \cite{Sen1986289, Hull198651}. We are interested in the space of allowed deformations, for which there are supersymmetric solutions to the supergravity equations of motion. We find that even though we need to deform the supersymmetry transformations accordingly, as was also pointed out in \cite{Melnikov:2014ywa}, the conditions for preservation of supersymmetry can be assumed to remain the same. Moreover, the space of connections which allow for such supersymmetric solutions to exist is again given by \eqref{eq:Mnabla}.  

In section \ref{sec:higher} we discuss extensions of these results to second order in $\a$. We find that the choice of the Hull connection, which was required at at $\OO(\a)$, is no longer consistent at higher orders. Indeed, as we shall see, insisting on the Hull connection can put additional constraints on the higher order geometry. In particular, the first order geometry is Calabi-Yau if we assume the internal space to be compact and smooth. This was also noted in \cite{Ivanov:2009rh}, where the first order geometry was taken as exact.

We also show that without loss of generality supersymmetric solutions may be assumed to satisfy the Strominger system, assuming that the internal space is compact and smooth. With this, the connection $\nabla$ should again satisfy the instanton condition. This condition looks surprisingly like a supersymmetry condition corresponding to the connection $\nabla$ as if it was a dynamical field. Indeed, it was precisely the fact that $(\nabla,\psi_{IJ})$, where $\psi_{IJ}$ is the supercovariant curvature, transforms as an $SO(9,1)$-Yang-Mills multiplet at $\OO(\a)$ which led to the construction of the $\OO(\a)$-action in the first place \cite{Bergshoeff1989439}. As also noted in \cite{Bergshoeff1989439}, this is symmetric with the gauge sector of the theory, and it is natural to assume this symmetry to higher orders in $\a$. This also prompts us to make a conjecture for what the connection choice should be at higher orders in $\a$.

We have left some technical details of the discussion of compactifications to four-dimensional Minkowski space to the Appendix, leaving us free to discuss supersymmetry and solutions in the bulk of the paper.

\section{First Order Heterotic Supergravity}
\label{sec:FirstOrder}
In this section we review heterotic supergravity at first order in $\a$ as first studied in \cite{Bergshoeff:1988nn, Bergshoeff1989439}. We write down the action and supersymmetry transformations, and review the supersymmetric solutions of this theory, commonly known as the Strominger sytem \cite{Strominger1986, Hull1986357, Lust:1986ix}. We show how consistency between the supersymmetry conditions and equations of motion requires that we make a certain connection choice on $TX$. In fact, the connection should satisfy the instanton condition. Various proofs of this have appeared in the literature before \cite{Ivanov:2009rh, Martelli:2010jx, delaOssa:2014cia}, and we give a slightly different proof in this paper. 

Moreover, the condition of a supersymmetry invariant supergravity action reduces this choice of connection further to the the Hull connection \cite{Hull1986357, Bergshoeff:1988nn, Bergshoeff1989439}. We show that by deforming this connection, the supersymmetry transformations must be transformed accordingly, however we find that the conditions for supersymmetric solutions can be taken as before. Moreover, the deformed connections must again be instantons. We comment on the moduli space of this condition, and note that these moduli are unphysical as they correspond to changes of the effective action regularisation scheme \cite{Hull198651}, or as we shall see explicitly, field redefinitions \cite{Sen1986289}. 

We comment briefly on the type of geometry that results from the first order supersymmetry conditions, and the fact that the compact space $X$ is conformally balanced. First order deformations of the corresponding system of equations where studied recently in \cite{delaOssa:2014cia} and \cite{Anderson:2014xha}.

\subsection{Action and Field Content}
Let us begin by recalling the bosonic part of the action at this order \cite{Bergshoeff1989439}\footnote{Although this action is valid to second order in $\a$, we only need it to first order in this section.}
\begin{align}
S=&
\int_{M_{10}}e^{-2\phi}\Big[*\mathcal{R}-4\vert\d\phi\vert^2+\frac{1}{2}\vert H\vert^2+\frac{\alpha'}{4}(\tr\:\vert F\vert^2-\tr\:\vert R\vert^2)\Big]+\OO(\a^3)\:,
\label{eq:action}
\end{align}
where $\mathcal{R}$ is the scalar curvature of the metric $g$, $F$ is the curvature of an $E_8\times E_8$ gauge bundle, $R$ is the curvature with respect to some connection $\nabla$ on the tangent bundle, and $H$ is the NS three-form,
\begin{equation}
\label{eq:anomalycancellation}
H=\d B+\frac{\a}{4}(\omega_{CS}^A-\omega^\nabla_{CS})\:,
\end{equation}
which is appropriately defined for the theory to be anomaly free. Here, $\omega^A_{CS}$ and $\omega^\nabla_{CS}$ are Chern-Simons three-forms of the gauge-connection $A$, and the tangent bundle connection $\nabla$ respectively. We also write $\vert\alpha\vert^2=\alpha\wedge*\alpha$ for $\alpha\in\Omega^*(X)$.

The fermonic fields of the theory are the gravitiono $\psi_M$, the dilatino $\lambda$ and the gaugino $\chi$. The $N=1$ supersymmetry variations of these fields are \cite{Strominger1986, Bergshoeff1989439},
\begin{align}
\label{eq:O1spinorsusy1}
\delta\psi_M &=\nabla^+_M\epsilon=\Big(\nabla^{\hbox{\tiny{LC}}}_M+\frac{1}{8}\H_M\Big)\epsilon+\OO(\a^2)\:,\\
\label{eq:O1spinorsusy2}
\delta\lambda &=\Big(\slashed\nabla^{\hbox{\tiny{LC}}}\phi+\frac{1}{12}\H\Big)\epsilon+\OO(\a^2)\:,\\
\label{eq:O1spinorsusy3}
\delta\chi &=-\frac{1}{2}F_{MN}\Gamma^{MN}\epsilon+\OO(\a)\:,
\end{align}
where $\H_M=H_{MNP}\Gamma^{NP}$, $\H=H_{MNP}\Gamma^{MNP}$, $\nabla^{\hbox{\tiny{LC}}}$ denotes the Levi-Civita connection, the $\Gamma^{M}$ are ten-dimensional gamma-matrices, and $\Gamma^{M_1..M_n}$ denote antisymmetrized products of gamma-matrices as usual. Here large roman letters denote ten-dimensional indices. Note that the transformation for the gauge field has a reduction in the order of $\a$. This is because the gauge field always appears with an extra factor of $\a$ in the action. In order to have a supersymmetric theory, we therefore only need to specify the gaugino transformation modulo $\OO(\a)$-terms. Supersymmetry for a given solution then requires that these variations are set to zero.

The choice of connection $\nabla$ is a subtle question. Firstly it cannot be a dynamical field, as there are no modes in the corresponding string theory corresponding to this. Hence, $\nabla$ must depend on the other fields of the theory in a particular way. This dependence is forced upon us once the supergravity action and supersymmetry transformations are specified.

Indeed, if we want the supergravity action to be invariant under the supersymmetry transformations \eqref{eq:O1spinorsusy1}-\eqref{eq:O1spinorsusy3} at $\OO(\a)$, we need a particular choice of connection in the action, namely the Hull connection $\nabla^-$, which has connection symbols given by
\begin{equation}
\label{eq:connsymb}
{\Gamma^{\pm}_{KL}}^M={\Gamma^{\hbox{\tiny{LC}}}_{KL}}^M\pm\frac{1}{2}{H_{KL}}^M\:,
\end{equation}
where ${\Gamma^{\hbox{\tiny{LC}}}_{KL}}^M$ are the connection symbols of the Levi-Civita connection. This connection is needed in order that $(\nabla,\psi_{IJ})$ transforms as a $SO(9,1)$ Yang-Mills multiplet, as explained in \cite{Bergshoeff1989439}. Here $\psi_{IJ}$ is the supercovariant curvature given by
\begin{equation}
\label{eq:scurv}
\psi_{IJ}=\nabla^+_I\psi_J-\nabla^+_J\psi_I\:,
\end{equation}
where $\nabla^+$ has connection symbols given by \eqref{eq:connsymb}. With this, the full first order heterotic action is invariant under supersymmetry.

\subsection{First Order Supersymmetry and Geometry}
In this section we briefly consider compactifications to four-dimensional Minkowski space. Details of the compactifications are laid out in Appendix \ref{app:comp}, where conventions are laid out. Consider the  set of supersymmetry equations \eqref{eq:O1spinorsusy1}--\eqref{eq:O1spinorsusy2} at first order in $\a$. We look at what conditions they impose on the internal geometry $X$. The resulting system is known as the Strominger system \cite{Strominger1986, Hull1986357}, and in terms of the fields  $(\Psi, \omega, H, \phi)$ they may be written as \cite{Gauntlett:2001ur}
\begin{align}
\label{eq:susy1}
\d(e^{-2\phi}\:\Psi)=&0\:,\\
\label{eq:susy2}
\d(e^{-2\phi}\:\omega\wedge\omega)=&0\:,\\
\label{eq:susy3}
-e^{2\phi}\:\d(e^{2\phi}\:\omega)=&*H\:,
\end{align}
Equation \eqref{eq:susy1} implies the existence of a holomorphic three-form $\Omega=e^{-2\phi}\Psi$. It also implies that the complex structure $J$ defined by $\Psi$ (as in Appendix \ref{app:comp}) is integrable, which means that $\bar\p^2=0$, where $\p$ and $\bar\p$ are the Dolbeault operators defined by the complex structure $J$. A complex three-fold $X$ satisfying equation \eqref{eq:susy2} is said to be conformally balanced. 

In terms of the Dolbeault operators, it may be shown that the flux may be written as 
\begin{equation}
\label{eq:flux2}
H=i(\p-\bar\p)\omega=T\:,
\end{equation}
as was first shown in \cite{Strominger1986}. The flux $H$ is identified with the torsion $T$ of the Bismut connection $\hat\nabla$, which is in fact the same as $\nabla^+$. In the mathematics literature, $\hat\nabla$ is the unique metric connection with totally antisymmetric torsion so that
\begin{equation*}
\hat\nabla J=0\:.
\end{equation*}

From the anomaly cancellation condition \eqref{eq:anomalycancellation}, we also have the Bianchi Identity
\begin{equation}
\d H=\frac{\a}{4}(\tr\: F\wedge F-\tr\: R\wedge R)\:.
\end{equation}
Setting the gaugino variation \eqref{eq:O1spinorsusy3} to zero is equivalent to requiring that the gauge-bundle is holomorphic, and satisfies the hermitian Yang-Mills equations on the internal space
\begin{equation}
\label{eq:inst1}
F\wedge\Psi=0,\;\;\;\; F\wedge\omega\wedge\omega=0,
\end{equation}
where $F$ is the field-strength of the $E_8\times E_8$ gauge-bundle.

\subsection{Instanton Condition}
\label{sec:instcond}

In order for the $\OO(\a)$-action to be invariant under the supersymmetry trasformations \eqref{eq:O1spinorsusy1}-\eqref{eq:O1spinorsusy3}, one is forced to choose the Hull connection in the action. As we shall see in section \ref{sec:supsol1}, this can be relaxed upon appropriate field redefinitions. Such field redefinitions also change the supersymmetry transformations. We will however see that even though the supersymmetry transformtions change, the {\it supersymmetry conditions} may be assumed to be the same. That is, we can without loss of generality, assume that our solutions solve the Strominger system. Furthermore, supersymmetry should be compatible with the bosonic equations of motion derived from \eqref{eq:action}. This leads to a condition on $\nabla$ known as the instanton condition which we now discuss. It should be noted that for supersymmetric solutions, as we also show in Appendix \ref{app:Hull}, the Hull connection does satisfy the instanton condition to the order we are working at \cite{Martelli:2010jx}.

It has been shown that the supersymmetry equations derived from \eqref{eq:O1spinorsusy1}-\eqref{eq:O1spinorsusy3} together with the Bianchi identity imply the equations of motion if and only if the connection $\nabla$ for the curvature two-form $R$ appearing in \eqref{eq:action} is an $SU(3)$-instanton \cite{Ivanov:2009rh, Martelli:2010jx}. This implies that it satisfies the conditions
\begin{equation}
\label{eq:inst2}
R\wedge\Omega=0,\;\;\;\;R\wedge\omega\wedge\omega=0\:,
\end{equation}
which are similar to those for the field-strength $F$. We present a proof of this in Appendix \ref{app:proof} for completeness.\footnote{A more general proof, similar to the one found in Appendix \ref{app:proof}, can also be found in \cite{Held:2010az}.} The first condition in \eqref{eq:inst2} implies that $R^{(0,2)}=0$. Therefore there is a holomorphic structure $\bp_\vartheta$ on the tangent bundle $TX$, where
\begin{equation*}
\bp_\vartheta=\bp+[\vartheta,\:]\:,
\end{equation*}
and $\vartheta$ is the $(0,1)$-part of the connection one-form of $\nabla$. We denote $TX$ with this holomorphic structure as $(TX,\nabla)$. Note that this holomorphic structure is different in general from the holomorphic structure on $TX$ induced by the complex structure $J$.

The second condition of \eqref{eq:inst2} says that the connection $\nabla$ is Yang-Mills, more precisely, $\nabla$ is an instanton. By a theorem of Li and Yau \cite{Yau87}, which generalizes the Donaldson Uhlenbeck-Yau theorem \cite{donaldson1985anti, uhlenbeck1986existence}, such a connection exists if and only if the holomorphic bundle $(TX,\nabla)$ is poly-stable. Moreover, the connection is the unique hermitian connection with respect to the corresponding hermitian structure on $TX$.\footnote{This theorem holds on complex manifolds with a Gauduchon metric. In our case, the Gauduchon metric is given by $e^{-\phi}\:\omega$.}

It is known that the stability condition is stable under first order deformations of the holomorphic structure \cite{MR2665168}. We extended the result of \cite{MR2665168} in \cite{delaOssa:2014cia}, where we found that the moduli space of infinitesimal deformations of $\bp_\vartheta$, including generic deformations of the Hermitian Yang-Mills conditions \eqref{eq:inst2}, where given by
\begin{equation}
\label{eq:modulitheta}
T\M_{\bp_\vartheta}=H^{(0,1)}_{\bp_\vartheta}(X,\End(TX))\:.
\end{equation}
More explicitly, in \cite{delaOssa:2014cia} we showed that for each $[\delta\vartheta]\in T\M_{\bp_\vartheta}$,\footnote{Here $[\delta\vartheta]$ denote equivalence classes of deformations modulo gauge transformations.} there is a corresponding element $\delta\vartheta\in[\delta\vartheta]$ so that the Yang-Mills condition is satisfied. Starting from the instanton connection, there is then an infinitesimal moduli space $T\M_{\bp_\vartheta}$ of connections for which the equations of motion are satisfied.

As mentioned, for the supergravity action to be invariant under the supersymmetry transformations \eqref{eq:O1spinorsusy1}-\eqref{eq:O1spinorsusy3}, the choice of connection is reduced further. In particular, invariance of the first order action forces the connection to be the Hull connection $\nabla^-$ \cite{Bergshoeff1989439}. Under these supersymmetry transformations, we therefore cannot choose any element in $T\M_{\bp_\vartheta}$ when deforming the Strominger system. Rather we have to choose the element corresponding to a deformation of the Hull connection.

\subsection{Changing the Connection}
We could ask what happens if we deform the connection in the action. Firstly, such deformations do not correspond to physical fields. We shall see in this section that they are equivalent to field redefinitions \cite{Sen1986289}. Secondly, insisting upon changing this connection means that we need to change the supersymmetry transformations correspondingly. However, it turns out that the conditions for supersymmetric solutions can be taken as before. Moreover, the condition that the new connection allows for such supersymmetric solutions to the theory forces the new connection precisely to satisfy the instanton condition. 

Let us discuss what happens when we change the connection $\nabla$ used in the action. That is, we let
\begin{equation}
\label{eq:connchange}
\nabla=\nabla^-+t\theta\:,
\end{equation}
where $\theta=\theta(\Phi)$ is a function of all the other fields of the theory, which we collectively have denoted by $\Phi$, and $t$ is an initesimal parameter. In the next section, we will take $t=\OO(\a)$, but for now we just assume it corresponds to an infinitesimal deformation of the connection. We are interested in what happens to the theory under such a small deformation.

Under supersymmetry, the new connection one-forms ${\Theta_I}^{JK}$ together with the supercovariant curvature $\psi_{IJ}$ transform as
\begin{align*}
\delta{\Theta_I}^{JK}&={(\delta\Theta^-+t\delta\theta)_I}^{JK}=\frac{1}{2}\bar\epsilon\:\Gamma_I\psi^{JK}+t{\delta\theta_I}^{JK}+\OO(\a)\:,\\
\delta\psi_{IJ}&=-\frac{1}{4}R^+_{IJKL}\Gamma^{KL}\:\epsilon=-\frac{1}{4}R^-_{KLIJ}\Gamma^{KL}\:\epsilon+\OO(\a)\\
&=-\frac{1}{4}\Big(R_{KLIJ}-t(\d_{\nabla^-}\theta)_{KLIJ}\Big)\Gamma^{KL}\:\epsilon+\OO(\a)\:,
\end{align*}
where we have used \eqref{eq:curvatureid} in the second equality of the second equation. The $\OO(\a)$-terms can be neglected to the order we are working at, but they will become important in the next section when we discuss the theory to higher orders in $\a$. We thus see that $(\Theta,\psi_{IJ})$ transforms as an $SO(9,1)$-Yang-Mills multiplet, modulo $\OO(t)$ and $\OO(\a)$-terms. As noted, the $\OO(\a)$-terms can be ignored for now, but the $\OO(t)$-terms will have to be dealt with. This is done by changing the supersymmetry transformations accordingly as we shall see below.\footnote{That a change of the connection requires a change of the supersymmetry transformations in order to have a supersymmetry invariant action has been noted before \cite{Melnikov:2014ywa}.}

A lemma of Bergshoeff and de Roo \cite{Bergshoeff1989439} (see also \cite{Andriot:2011iw}) states that the action deforms as
\begin{equation}
\label{eq:lemmabos}
\frac{\delta S}{\delta\nabla^-}\propto\a\:\textrm{B}_0+\OO(\a^2)\:,
\end{equation}
under an infinitesimal deformation of the Hull connection. Here $\textrm{B}_0$ denotes a combination of zeroth order bosonic equations of motion. As the correction to the action due to the change of connection \eqref{eq:connchange} is proportional to the equations of motion, the change of connection $t\theta$ may equivalently be viewed as an infinitesimal field redefinition of order $\OO(t,\a)$, and is therefore non-physical.\footnote{That deformations of the connection corresponds to a field redefinition has been noted in the literature before, see e.g. \cite{Hull1986187, Sen1986289, Becker:2009df, Melnikov:2014ywa}.} It should be noted that the right hand side of \eqref{eq:lemmabos} generically also contains extra fermionic terms. The change of the connection \eqref{eq:connchange} hence requires a bosonic field redefinition, in addition to a change of the fermionic sector of the action. Appropriate changes to the fermionic sector are however rather common, and where utilized in e.g. \cite{Bergshoeff1989439} to write the supersymmetry transformations in a convenient way. See footnote \ref{foot:note} below.

We want to consider what happens to the theory under these deformations of the connection. In particular, we are interested in the allowed deformations of the connection, or equivalently field redefinitions, for which supersymmetric solutions to the Strominger system exist. We expect this to be related to the moduli space of connections \eqref{eq:modulitheta} studied in \cite{delaOssa:2014cia}, and we see that this is indeed the case.

From \eqref{eq:lemmabos} it follows that the change to the action due to the correction,
\begin{equation*}
{\delta_t(\delta\Theta)_I}^{JK}=t{\delta\theta_I}^{JK}\:,
\end{equation*}
of the transformation of $\Theta$ can be absorbed in a redefinition of the bosonic supersymmetry transformations by a similar procedure as is done in \cite{Bergshoeff1989439} for the $\OO(\a^2)$-corrections to the supersymmetry transformations. Similarly, we also have
\begin{equation}
\label{eq:lemmaferm}
\frac{\delta S}{\delta\psi_{IJ}}\propto\a\Psi^{IJ}_0+\OO(\a^2)\:,
\end{equation}
by \cite{Bergshoeff1989439}, where $\Psi_0$ is a combination of zeroth order fermionic and bosonic equations of motion. It follows that the change in the action due to the correction,
\begin{equation*}
\delta_t(\delta\psi)_{IJ}=\frac{t}{4}(\d_{\nabla^-}\theta)_{KLIJ}\Gamma^{KL}\epsilon\:,
\end{equation*}
may be absorbed into a redefinition of the fermionic supersymmetry transformations which now read
\begin{align}
\delta\psi_M=&\left(\nabla^+_M+\frac{t}{4}\:\mathcal{C}_M\right)\epsilon+\OO(\a^2)\notag\\
\label{eq:spinorsusy1def}
=&\left(\nabla^{\hbox{\tiny{LC}}}_M+\frac{1}{8}(\H_M+2t\mathcal{C}_M)\right)\epsilon+\OO(\a^2)\:,\\
\label{eq:spinorsusy2def}
\delta\lambda=&-\frac{1}{2\sqrt{2}}\left(\slashed\nabla^{\hbox{\tiny{LC}}}\phi+\frac{1}{12}(\H+3t\mathcal{C})\right)\epsilon+\OO(\a^2)\:,\\
\label{eq:spinorsusy3def}
\delta\chi=&-\frac{1}{2}F_{MN}\Gamma^{MN}\:\epsilon+\OO(\a)\:,
\end{align}
where
\begin{equation}
\label{eq:C}
C_{MAB}=\a12\:e^{2\phi}\:\nabla^{+L}e^{-2\phi}\Big((\d_{\nabla^-}\theta)_{ABLM}\Big)\:.
\end{equation}
Here $\mathcal{C}_M=C_{MAB}\Gamma^{AB}$ and $\mathcal{C}=C_{MAB}\Gamma^{MAB}$. The index labels $A,B,..$ denote tangent space (flat) indices, that is, $\Gamma^M=e^M_A\Gamma^A$, where $\{e^M_A\}$ is a ten-dimensional viel-bein frame. We have written the corrections in this way to be able to compare with the higher order $\a$-corrections in the next section. With the new supersymmetry transformations \eqref{eq:spinorsusy1def}-\eqref{eq:spinorsusy3def}, the action with the new connection is again invariant. 

As we saw above, deforming the connection $\nabla^-\rightarrow\nabla^-+t\theta$, really just corresponds to an $\OO(\a)$-field redefinition. Hence, the supersymmetry algebra above (including the bosonic transformations, which we did not write down for brevity) should just be the old algebra written in terms of the new fields. There are therefore no issues concerning closedness of the algebra.

\subsection{Supersymmetric Solutions}
\label{sec:supsol1}
Let us look for four-dimensional supersymmetric maximally symmetric compact solutions to the $t$-adjusted theory. This ammounts to setting the transformations \eqref{eq:spinorsusy1def}-\eqref{eq:spinorsusy3def} to zero. We consider solutions such that
\begin{equation}
\label{eq:Stromtcorr}
\nabla^+_m\,\eta=\OO(\a^2)\:,
\end{equation}
where $\eta$ is the six-dimensional spinor on $X$. Given the redefined supersymmetry transformations, this might seem like a restriction of allowed supersymmetric solutions. However, this is not the case, at least for compact solutions. Indeed, we have the following proposition.

\begin{Proposition}
\label{prop:Strom}
Consider heterotic compactifications to four dimensions on a smooth compact space $X$ at $\OO(\a^{2n-1})$ or less. If
\begin{equation*}
\nabla^+\eta=\OO(\a^n)\:,
\end{equation*}
then without loss of generality we may assume that
\begin{equation*}
\nabla^+\eta=0\:,
\end{equation*}
i.e. the solutions are solutions of the Strominger system.
\end{Proposition}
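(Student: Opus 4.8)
Since a deformation of the connection $\nabla$ used in the action is a field redefinition, hence unphysical, one expects the apparent $\OO(\a^n)$ violation of $\nabla^+\eta=0$ to be an artefact removable by such a redefinition; the plan is to make this precise by an induction on $n$, removing the obstruction one power of $\a$ at a time. The base case $n=1$ is immediate: \eqref{eq:O1spinorsusy1} gives $\delta\psi_M=\nabla^+_M\epsilon+\OO(\a^2)$, so setting the variation to zero already forces $\nabla^+\eta=\OO(\a^2)$, which vanishes in an $\OO(\a)$ theory.

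For the inductive step, assume $\nabla^+\eta=\OO(\a^n)$ in the $\OO(\a^{2n-1})$ theory and write $\nabla^+_m\eta=\a^n\,\Lambda_m+\OO(\a^{n+1})$ for a spinor-valued one-form $\Lambda$ on $X$. I would first decompose $\Lambda_m$ into irreducible components under the (approximate) $SU(3)$-structure defined by $\eta$ --- schematically the pieces along $\eta$, along $\Gamma_m\eta$, and along the $\Gamma$-matrix bilinears acting on $\eta$ --- since the vanishing of the dilatino and gaugino variations, the Bianchi identity and the bosonic equations of motion, all of which hold for the solution, constrain which components can be non-zero at this order. The key move is then a connection deformation $\nabla^-\to\nabla^-+t\theta$ with $t\theta=\OO(\a^{n-1})$: by the lemma of Bergshoeff and de Roo \eqref{eq:lemmabos} this is a field redefinition, and by \eqref{eq:spinorsusy1def}--\eqref{eq:C} it shifts the gravitino variation by $\tfrac{t}{4}\mathcal{C}_m\epsilon$, with $\mathcal{C}_m=C_{mAB}\Gamma^{AB}$ and $C_{mAB}\propto\a\,e^{2\phi}\nabla^{+L}\!\big(e^{-2\phi}(\d_{\nabla^-}\theta)_{ABLm}\big)$; since $t\theta=\OO(\a^{n-1})$ this correction is $\OO(\a^n)$ and leaves lower orders untouched. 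Setting the deformed gravitino variation to zero replaces $\Lambda_m$ by $\Lambda_m$ plus a second-order differential operator applied to $\theta$, and the goal is to choose $\theta$, a $(0,1)$-form valued in $\End(TX)$, making this combination vanish.

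This last step is where compactness of $X$ is essential, and it is the part I expect to be the real obstacle. After gauge-fixing $\theta$ (modulo the gauge transformations of $\End(TX)$), the equation $\Lambda_m+\tfrac14 C_{mAB}\Gamma^{AB}\eta=0$ becomes a Fredholm equation for $\theta$ on the compact manifold $X$, solvable precisely when $\Lambda$ is orthogonal to the finite-dimensional cokernel, which is controlled by the space $H^{(0,1)}_{\bp_\vartheta}(X,\End(TX))$ of \eqref{eq:modulitheta}. The substantive claim is that the obstruction $\Lambda$ always lies in the image: I would argue that the constraints collected above force $\Lambda$ to be $\bp_\vartheta$-exact in the relevant sense, hence to pair trivially with harmonic representatives, and that the ``divergence'' form $e^{2\phi}\nabla^{+L}(e^{-2\phi}\cdots)$ of $\mathcal{C}$, whose integral over the compact $X$ vanishes, matches exactly the structure of such a correction term. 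Compactness and smoothness also ensure, via integration by parts with no boundary contribution, that the deformation is genuinely a field redefinition, so that the Strominger conditions \eqref{eq:susy1}--\eqref{eq:susy3} are unaffected by the operation.

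Finally, having removed the order-$\a^n$ part of $\nabla^+\eta$, one has $\nabla^+\eta=\OO(\a^{n+1})$; if $n+1\le 2n-1$ I would repeat the argument with a connection deformation of order $\a^{n}$, and iterate, so that after finitely many steps $\nabla^+\eta=\OO(\a^{2n})$, which vanishes to the order we work at. Since the dilatino and gaugino variations are corrected only through the same $\mathcal{C}$-term and, as noted in the text, impose no new conditions once $\nabla^+\eta=0$, the solution then solves the Strominger system, as claimed.
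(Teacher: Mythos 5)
Your strategy is genuinely different from the paper's, and it has a gap at its core that you yourself flag: the claim that the obstruction $\Lambda$ always lies in the image of the second-order operator built from $\mathcal{C}$ is never established. The cokernel you identify is controlled by $H^{(0,1)}_{\bp_\vartheta}(X,\End(TX))$, which is generically non-trivial (indeed, its non-vanishing is the whole point of the surrounding discussion of the moduli space of connections), so there is no a priori reason the Fredholm alternative works in your favour; the heuristic about the divergence form of $\mathcal{C}$ integrating to zero does not show that $\Lambda$ pairs trivially with every harmonic representative. There is also a logical mismatch: even granting solvability, choosing $\theta$ so that the \emph{deformed} gravitino variation $\bigl(\nabla^+_m+\tfrac{t}{4}\mathcal{C}_m\bigr)\eta$ vanishes gives $\nabla^+_m\eta=-\tfrac{t}{4}\mathcal{C}_m\eta$, not $\nabla^+_m\eta=0$. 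The proposition is a statement about the geometric quantity $\nabla^+\eta$ itself, and in the paper's logic the vanishing of $\mathcal{C}_m\eta$ is derived \emph{afterwards}, as a consequence of the Strominger system plus the instanton condition --- so using it as an input to absorb $\Lambda$ inverts the argument.

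The paper's proof requires none of this machinery. It is a short spectral argument: from $\nabla^+_m\eta=\OO(\a^n)$ and an integration by parts on the compact $X$ (valid because the Bismut connection has totally antisymmetric torsion) one gets $(\eta,\Delta_+\eta)=\OO(\a^{2n})$ with $\Delta_+=-\nabla^+_m\nabla^{+m}$ positive semi-definite. Expanding $\eta$ in eigenmodes, each term $\lambda_k|\alpha_k|^2$ is separately $\OO(\a^{2n})$; since $\|\eta\|^2=\OO(1)$ some $\alpha_k=\OO(1)$, forcing the corresponding eigenvalue to be $\lambda_k=\OO(\a^{2n})$, which may be set to zero in a theory truncated at $\OO(\a^{2n-1})$. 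One then replaces $\eta$ by that genuine zero mode, which is $\nabla^+$-parallel. No induction, no field redefinition, and no solvability analysis is needed --- compactness enters only through the integration by parts and the discrete spectrum.
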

\begin{proof}
First note that since $\nabla^+_m\:\eta=\OO(\a^n)$, it follows that
\begin{align*}
\OO(\a^{2n})=(\nabla^+_m\:\eta,\nabla^{+m}\:\eta)=\int_X(\nabla^+_m\:\eta)^\dagger\nabla^{+m}\:\eta=\int_X\eta^\dagger\Delta_+\eta=(\eta,\Delta_+\eta)\:,
\end{align*}
upon an integration by parts.\footnote{As the Bismut connection has antisymmetric torsion, it follows that $\nabla^+_mv^m=\nabla^{\hbox{\tiny{LC}}}_mv^m$ for some vector field $v^i$. This allows the integration by parts to be carried out.} Here 
\begin{equation*}
\Delta_+\eta=-\nabla_m^+\nabla^{+m}\:\eta\:,
\end{equation*}
and $\Delta_+$ is the Laplacian of the Bismut connection.

Next expand $\eta$ in eigen-modes of $\Delta_+$,
\begin{equation*}
\vert\eta\rangle=\sum_n\alpha_n\vert\psi_n\rangle\:,
\end{equation*}
where $\{\vert\psi_n\rangle\}$ is an orthonormal basis of eigenspinors of $\Delta_+$ with corresponding eigenvalues ${\lambda_n}$, and where we have gone to bra-ket notation for convenience. We can then compute
\begin{equation*}
(\eta,\Delta_+\eta)=\langle\eta\vert\Delta_+\vert\eta\rangle=\sum_n\lambda_n\vert\alpha_n\vert^2=\OO(\a^{2n})\:.
\end{equation*}
Note that $\lambda_n\ge0$ as $\Delta_+$ is positive semi-definite. From this it follows that each term in the above sum is of $\OO(\a^{2n})$. That is  
\begin{equation}
\label{eq:sumtermO4}
\lambda_n\vert\alpha_n\vert^2=\OO(\a^{2n})\:,\;\;\;\;\forall\;\;\;\;n\:.
\end{equation}
Moreover, we know that $\vert\eta\rangle=\OO(1)$, which implies
\begin{equation*}
\vert\vert\eta\vert\vert^2=\sum_n\vert\alpha_n\vert^2=\OO(1)\:.
\end{equation*}
It follows that at least one $\alpha_k=\OO(1)$.\footnote{We treat $\a$ as a formal expansion parameter, so sums of terms of higher orders in $\a$ cannot decrease the order in $\a$.} Then, from \eqref{eq:sumtermO4}, the corresponding eigenvalue is $\lambda_k=\OO(\a^{2n})$. At the given order in $\a$, $\OO(\a^{2n-1})$, we may without loss of generality set $\lambda_k=0$. It follows that there is a spinor in the kernel of $\nabla^+$, which we may take to be $\eta$.
\end{proof}
Using Proposition \ref{prop:Strom} we may set $n=1$ to get \eqref{eq:Stromtcorr}. It also follows from equation \eqref{eq:spinorsusy1def} that we need
\begin{equation}
\label{eq:vanishC}
\mathcal{C}_m\:\eta=\OO(\a^2)\:,
\end{equation}
for the solution to be supersymmetric. From Appendix \ref{app:proof} it then follows that the corrected connection $\nabla=\nabla^-+t\theta$ should be an instanton.

It is easy to see that \eqref{eq:vanishC} is satisfied, once we know that we are working with supersymmetric solutions of the Strominger system. Plugging the connection $\nabla$ into the instanton condition, and using that $\nabla^-$ is an instanton at this order, we find
\begin{equation}
(\d_{\nabla^-}\theta)_{mn}\Gamma^{mn}\:\eta=\OO(\a)\:,
\end{equation}
precicely the condition for the deformed connection to remain an instanton. From this, it also follows that
\begin{align*}
\mathcal{C}_m\eta&=\a12e^{2\phi}\nabla^{+n}e^{-2\phi}\Big((\d_{\nabla^-}\theta)_{ABnm}\Big)\Gamma^{AB}\eta\\
&=\a12e^{2\phi}\nabla^{+n}e^{-2\phi}\Big((\d_{\nabla^-}\theta)_{ABnm}\Gamma^{AB}\eta\Big)\\
&=\OO(\a^2)\:,
\end{align*}
as desired.
 
Finally, we remark that as noted in \cite{delaOssa:2014cia} there is an infinitesimal moduli space 
\begin{equation}
\label{eq:Mtheta}
T\M_{\bp_{\vartheta^-}}=H^{(0,1)}_{\bp_{\vartheta^-}}(X,\End(TX))
\end{equation}
of connections satisfying this condition, where the tangent space $T\M_{\bp_{\vartheta^-}}$ is taken {\it at the Hull connection}. Each connection in this moduli space corresponds to a field redefinition of the supergravity with the corresponding change of the supersymmetry transformations, \eqref{eq:spinorsusy1def}-\eqref{eq:spinorsusy3def}. Compact supersymmetric solutions of these equations may by Proposition \ref{prop:Strom} be assumed to be solutions of the Strominger system, and they also solve the equations of motion provided $\theta\in T\M_{\bp_{\vartheta^-}}$. From this perspective, the moduli space \eqref{eq:Mtheta} found in \cite{delaOssa:2014cia} is unphysical. That is, the moduli space \eqref{eq:Mtheta} may then be viewed as the space of allowed infinitesimal $\OO(\a)$ field redefinitions for which the equations of motion and supersymmetry are compatible.\footnote{Preserving the Bianchi identity puts additional requirements on $\theta$. These where worked out in detail in \cite{delaOssa:2014cia}. We neglect these issues in this paper as they are beside the point we wish to make here.}

\section{Higher Order Heterotic Supergravity}
\label{sec:higher}
Having discussed the first order theory, we now consider heterotic supergravity at higher orders in $\alpha'$. We continue our investigation from a ten-dimensional supergravity point of view, by a similar analysis as that of Bergshoeff and de Roo \cite{Bergshoeff1989439}. In \cite{Bergshoeff1989439} the Hull connection was used at higher orders as well. We wish to generalize this analysis a bit, and allow for a more general connection choice in the action, as was done in the previous section. In order not to overcomplicate matters unnecessarily, we return to letting the $TX$-connection be the Hull connection at $\OO(\a)$, which is needed in order that the full action be invariant under the usual supersymmetry transformations \eqref{eq:O1spinorsusy1}-\eqref{eq:O1spinorsusy3} at $\OO(\a)$. We will however allow this connection to receive corrections at $\OO(\a^2)$. 

There are two important points which we wish to emphasise in this section. Firstly, as we saw in the last section, we may deform the tangent bundle connection away from the Hull connection provided we deform the supersymmetry transformations correspondingly. We take a similar approach in this section, where we deform away from the Hull connection by an $\a$-correction, $\nabla=\nabla^-+\theta$ where now $\theta=\OO(\a)$, and depends on the fields of the theory. Our findings from the previous section also persist in this section. That is, the deformation $\theta$ now corresponds to an $\OO(\a^2)$ field redefinition, and deforming $\theta$ is therefore non-physical in this sense. Moreover, the supersymmetry transformations also change with $\theta$, in accordance with the deformed fields. However, not all field choices allow for supersymmetric solutions of the Strominger system.

Secondly, we note there is a symmetry between the tangent bundle connection $\nabla$ and gauge connection $A$ in the first order action. As a guiding principle, as is also done in \cite{Bergshoeff1989439}, we would like to keep this symmetry to higher orders. With this philosophy it seems natural to choose $\nabla$ so it satisfies its own equation of motion similar to that of $A$, at the locus where the equations of motion are satisfied. Note that this is true for the Hull connection at $\OO(\a)$, by equation \eqref{eq:lemmabos}.  

Moreover, this also seems to be the connection choice we need in order for the supersymmetry conditions to hold at the locus of equations of motion. Indeed, we find the following
\begin{Theorem}
\label{tm:inst}
Strominger system type solutions, where $\nabla^+\epsilon=0$ for heterotic compactifications on a compact six-dimensional manifold $X$, survive as solutions of heterotic supergravity at $\OO(\a^2)$ if and only if the connection $\nabla$ is an instanton, satisfying it's own ``supersymmetry condition"
\begin{equation}
R_{mn}\Gamma^{mn}\:\eta=0\:.
\end{equation}
Compact $\OO(\a^2)$-supersymmetric solutions can without loss of generality be assumed to be of this type. Moreover, $\nabla$ satisfies it's own equation of motion for these solutions.
\end{Theorem}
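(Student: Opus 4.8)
The plan is to establish the three assertions in turn. First, the reduction to the Strominger system: setting the $\OO(\a^2)$-corrected gravitino variation to zero gives $\nabla^+_M\epsilon=\OO(\a^2)$, because every correction to $\delta\psi_M$ beyond the $\nabla^+_M$ term carries explicit powers of $\a$ (compare the $\mathcal{C}_M$ term in \eqref{eq:spinorsusy1def}, which is $\OO(\a^2)$ once $\theta=\OO(\a)$). On $X$ this reads $\nabla^+\eta=\OO(\a^2)$, so Proposition \ref{prop:Strom} applies with $n=2$ (for which the $\OO(\a^2)$ theory is ``$\OO(\a^{2n-1})$ or less'') and lets us assume $\nabla^+\eta=0$ without loss of generality; the dilatino equation is then part of the Strominger system, its $\mathcal{C}$-correction vanishing on the instanton background exactly as in Section \ref{sec:supsol1}. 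The internal geometry thus solves \eqref{eq:susy1}--\eqref{eq:susy3}, which is the second assertion of the theorem; it then remains only to decide when such a background solves the $\OO(\a^2)$ equations of motion.

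For necessity of the instanton condition: with the anomalous Bianchi identity \eqref{eq:anomalycancellation} imposed, the equations of motion derived from \eqref{eq:action} and its $\OO(\a^2)$ completion organise as (the $\OO(\a)$ equations of motion) plus ($\OO(\a^2)$ corrections). By the first-order result of \cite{Ivanov:2009rh, Martelli:2010jx}, reproved in Appendix \ref{app:proof}, a Strominger background together with the Bianchi identity satisfies all equations of motion to $\OO(\a)$ if and only if $\nabla$ is an $SU(3)$-instanton, \eqref{eq:inst2}, equivalently $R^{(0,2)}=0$ and $R\wedge\omega\wedge\omega=0$, equivalently $R_{mn}\Gamma^{mn}\eta=0$. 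If $\nabla$ fails this condition at the order to which we work, the obstruction in the equations of motion is non-zero at that order and cannot be cancelled by the $\OO(\a^2)$ corrections; hence the instanton condition is necessary, and in particular the uncorrected Hull connection is generically inadmissible at $\OO(\a^2)$.

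For sufficiency, it remains to check that, when $\nabla$ is an instanton, the $\OO(\a^2)$ corrections to the equations of motion also vanish on the Strominger background. Here I would use the symmetry, emphasised in \cite{Bergshoeff1989439} and reviewed in Section \ref{sec:FirstOrder}, under which $(\nabla,\psi_{IJ})$ and the gauge multiplet $(A,\chi)$ enter the action on the same footing, with $R\leftrightarrow F$ up to the relative sign of their kinetic terms. The $\OO(\a^2)$ completion of \eqref{eq:action} then splits into (i) terms built from the $\tr F$ sector, which vanish on a Strominger background because $A$ satisfies the Hermitian--Yang--Mills equations \eqref{eq:inst1}; (ii) their $\tr R$ mirror images, which by the same symmetry vanish precisely when $\nabla$ satisfies \eqref{eq:inst2}, i.e.\ when $R_{mn}\Gamma^{mn}\eta=0$; and (iii) mixed $F$--$R$ terms, which on the supersymmetric locus reduce, via the Killing-spinor equation $\nabla^+\eta=0$, to contractions proportional to $F_{mn}\Gamma^{mn}\eta$ or $R_{mn}\Gamma^{mn}\eta$ and so again vanish. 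An equivalent route, paralleling Section \ref{sec:FirstOrder}: write $\nabla=\nabla^-+\theta$ with $\theta=\OO(\a)$; by the $\OO(\a^2)$-extension of the Bergshoeff--de Roo lemma \eqref{eq:lemmabos}--\eqref{eq:lemmaferm} the change of action is proportional to lower-order equations of motion together with a $\mathcal{C}_M$-type correction to $\delta\psi_M$, and demanding $\mathcal{C}_m\eta=\OO(\a^3)$ forces $(\d_{\nabla^-}\theta)_{mn}\Gamma^{mn}\eta=\OO(\a^2)$, i.e.\ the corrected $\nabla$ is an instanton to the required order. Combined with necessity this yields the stated equivalence.

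Finally, for the field equation of $\nabla$: regarding $(\nabla,\psi_{IJ})$ as a formal Yang--Mills multiplet in the sense of \cite{Bergshoeff1989439}, the equation of motion for $\nabla$ is a Yang--Mills-type equation for $R$ --- with the usual dilaton/torsion dressing, and including the Chern--Simons contribution of $\nabla$ to $H$ --- up to terms proportional to the remaining equations of motion, exactly as for the gauge field $A$. On the Strominger background an $SU(3)$-instanton is automatically Yang--Mills with respect to the Gauduchon metric $e^{-\phi}\omega$, this being the content of the second equation of \eqref{eq:inst2}, while the other equations of motion hold by the previous steps; hence $\nabla$ solves its own field equation. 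The delicate point of the whole argument is the sufficiency step: proving rigorously that the genuine $\OO(\a^2)$ corrections vanish on a Strominger instanton background. The symmetry argument is persuasive but schematic, and a complete treatment would require the explicit $\OO(\a^2)$ action of \cite{Bergshoeff1989439} together with a careful reduction of its variation on the supersymmetric locus --- in particular a clean handling of the mixed $F$--$R$ terms --- whereas the other ingredients, Proposition \ref{prop:Strom} and the first-order analysis of Appendix \ref{app:proof}, are already in place.
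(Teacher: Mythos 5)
Your overall strategy matches the paper's: Proposition \ref{prop:Strom} with $n=2$ to reduce to the Strominger system, the first-order instanton analysis for the equations of motion, and the primitivity identity for the $\nabla$-equation of motion. The necessity direction and the final assertion are handled essentially as in the paper (for the latter, the paper's precise mechanism is the identity $*R=-\omega\wedge R$ for a primitive $(1,1)$-form, which makes $e^{2\phi}\d_\nabla(e^{-2\phi}*R)-R\wedge *H$ vanish on the supersymmetric locus; your appeal to ``instanton implies Yang--Mills for the Gauduchon metric'' is the right fact but skips the dilaton/torsion dressing).

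The genuine gap is in your sufficiency step, and it stems from a missing input rather than a wrong idea: by Bergshoeff and de Roo, \emph{the bosonic action receives no corrections at $\OO(\a^2)$} --- the action \eqref{eq:2action} is literally the first-order action, with the only new freedom being the $\OO(\a)$ shift $\theta$ of the connection inside $R$ and $\omega^\nabla_{CS}$. Consequently there is no ``$\OO(\a^2)$ completion'' to decompose into $F$-sector, $R$-sector and mixed $F$--$R$ pieces; the term-by-term symmetry argument you construct (and correctly flag as schematic) is addressing terms that are not there at this order. The paper instead observes that the rewriting \eqref{eq:6daction} of the six-dimensional action in $SU(3)$-structure forms therefore holds at $\OO(\a^2)$, and its variation at the supersymmetric locus leaves only the $\tr\vert R^{(2,0)}\vert^2+\tr\vert R^{(0,2)}\vert^2+\tfrac14\tr\vert R_{mn}\omega^{mn}\vert^2$ terms, so $R_{mn}\Gamma^{mn}\eta=\OO(\a^2)$ is both necessary and sufficient in one stroke. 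The remaining $\OO(\a^2)$ effects come only from $\delta S/\delta\nabla$ composed with the field-dependence of $\nabla$, and these drop out by \eqref{eq:lemmabos} together with the fact that the instanton solves \eqref{eq:nablaeom}. Your ``equivalent route'' also inverts the paper's logic on one point: the instanton condition is forced by the equations of motion, and the vanishing of the corrected supersymmetry term $\tilde P_{mAB}\Gamma^{AB}\eta$ is then an automatic consequence via \eqref{eq:curvatureid}, not an independent requirement that derives the instanton condition.
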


Note then that our choice of connection is as if the connection $\nabla$ was dynamical. We again stress that this is not the case, as $\nabla$ must depend on the other fields of the theory. We only require the connection to satisfy an equation of motion (as if it was dynamical), and this then relates to \it how\rm\;$\nabla$ depends on the other fields. 

With these observations, we make the following conjecture
\begin{Conjecture}
\label{con:higherorder}
At higher orders in $\a$, the correct connection choice/field choice is the choice which preserves the symmetry between $\nabla$ and $A$. That is, $\nabla$ should be chosen as if it was dynamical, satisfying it's own equation of motion. Moreover, for supersymmetric solutions, $\nabla$ should be chosen to satisfy it's own supersymmetry condition, similar to the one satisfied by $A$.
\end{Conjecture}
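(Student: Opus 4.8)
The plan is to lift the first-order argument of Section~\ref{sec:supsol1} one order in $\a$, with the order-$\a$ obstruction of \eqref{eq:vanishC} replaced by its order-$\a^2$ analogue, and to read off the ``own equation of motion'' clause from the $\nabla\leftrightarrow A$ symmetry of \eqref{eq:action}. First I would dispose of the ``without loss of generality'' claim using Proposition~\ref{prop:Strom}. Setting the $\OO(\a^2)$-corrected gravitino variation (the higher-order analogue of \eqref{eq:spinorsusy1def}, with $\mathcal{C}$ now the genuinely $\OO(\a^2)$ quantity built from $\d_{\nabla^-}\theta$, $\theta=\OO(\a)$) to zero gives $\nabla^+\eta=\OO(\a^2)$; since working at $\OO(\a^2)$ is covered by Proposition~\ref{prop:Strom} with $n=2$ (valid at $\OO(\a^3)$ or less), on the compact smooth $X$ we may assume $\nabla^+\eta=0$, i.e.\ the solution solves the Strominger system.

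Next, for the ``if and only if'', I would run the Ivanov--Martelli computation of Appendix~\ref{app:proof} at $\OO(\a^2)$: the claim is that on a Strominger background the $\OO(\a^2)$ bosonic equations of motion from \eqref{eq:action}, together with the Bianchi identity, are equivalent to the instanton conditions on the curvature $R$ of $\nabla$. Because \eqref{eq:action} carries no explicit $\a^2$ term, the only $\OO(\a^2)$ pieces of the equations of motion come from the $|H|^2$ cross-terms involving $\omega^A_{CS}-\omega^\nabla_{CS}$ and from the dependence of $R$ in $\tr\,|R|^2$ on the $\a$-corrected fields; evaluating these on a Strominger background and simplifying with the first-order supersymmetry identities \eqref{eq:susy1}--\eqref{eq:flux2}, everything collapses onto $R_{mn}\Gamma^{mn}\eta$. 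With $\nabla=\nabla^-+\theta$ and $\nabla^-$ an instanton only to $\OO(\a)$ (Appendix~\ref{app:Hull}), one has $R_{mn}\Gamma^{mn}\eta=(\d_{\nabla^-}\theta)_{mn}\Gamma^{mn}\eta+\OO(\a^2)$, so the equations of motion are satisfied iff this vanishes; that this is also exactly what the deformed supersymmetry variations demand --- by the manipulation at the end of Section~\ref{sec:supsol1}, now one order higher, which shows $\mathcal{C}_m\eta=\OO(\a^3)$ under the instanton condition --- confirms the ``supersymmetry condition'' reading of the statement.

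For ``$\nabla$ satisfies its own equation of motion'' I would note that the variation of $S$ with respect to $\nabla$, treated formally as dynamical, arises from $-\tfrac{\a}{4}\tr\,|R|^2$ and from $\omega^\nabla_{CS}$ in $H$, the leading contribution being the dilaton-weighted Yang--Mills equation $D_\nabla(e^{-2\phi}*R)=0$ --- the exact analogue of \eqref{eq:lemmabos} and of the gauge-field equation solved by \eqref{eq:inst1}. Since $e^{-\phi}\omega$ is Gauduchon and $R$ is Hermitian Yang--Mills, this holds, with the residual $\OO(\a^2)$ term vanishing by \eqref{eq:inst2}. I would finish by remarking that, because $\nabla^-$ is an instanton only to $\OO(\a)$, a nontrivial correction $\theta=\OO(\a)$ is generically forced, and the admissible corrections form the moduli space $H^{(0,1)}_{\bp_{\vartheta^-}}(X,\End(TX))$ of \eqref{eq:Mtheta}, now read at $\OO(\a^2)$ --- the observation motivating Conjecture~\ref{con:higherorder}.

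The main obstacle I expect is the $\OO(\a^2)$ bookkeeping in the second step: unlike at first order, $R$ now depends on the $\a$-corrections through $H$ and through $\theta$, and the equations of motion receive $\OO(\a^2)$ contributions from the $|H|^2$ cross-terms and the Chern--Simons pieces, so one must verify that every such term is either a genuine first-order supersymmetry identity or collapses onto $R_{mn}\Gamma^{mn}\eta$, with no leftover obstruction. Keeping the fermionic terms on the right-hand sides of \eqref{eq:lemmabos} and \eqref{eq:lemmaferm} to the correct order, and checking that the bosonic and fermionic field redefinitions absorbing $\delta_t(\delta\Theta)$ and $\delta_t(\delta\psi)$ close without reintroducing obstructions at the new order, is the delicate part of the argument.
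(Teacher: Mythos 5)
The statement you have been asked to prove is a \emph{Conjecture}, and the paper does not prove it; what you have written is essentially a re-derivation of Theorem \ref{tm:inst}, the $\OO(\a^2)$ statement, whose ingredients the paper assembles from Proposition \ref{prop:Strom}, Appendix \ref{app:proof} and the manipulations of section \ref{sec:differentcon}. On that restricted claim your proposal tracks the paper's logic closely: Proposition \ref{prop:Strom} with $n=2$ to reduce to the Strominger system, the rewriting \eqref{eq:6daction} to extract the instanton condition as the compatibility requirement between supersymmetry and the equations of motion, the identity \eqref{eq:curvatureid} to show that the corrected gravitino term vanishes once $R$ is an instanton, and the primitivity identity $*R=-\omega\wedge R$ to verify that $\nabla$ satisfies its own Yang--Mills-type equation. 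All of this is the \emph{evidence} for Conjecture \ref{con:higherorder}, not a proof of it.

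The conjecture itself concerns $\OO(\a^3)$ and beyond, and there your strategy of ``lifting the first-order argument one order'' breaks down for reasons the paper itself flags. First, the bosonic action is corrected at cubic order by the $\vert T\vert^2$ and $T_{IJ}T^{IJ}$ terms of \eqref{eq:cubicaction}, so the variational computation of Appendix \ref{app:proof} --- which relies on the action carrying no corrections beyond the explicit $\tr\vert R\vert^2$ term --- no longer collapses onto $R_{mn}\Gamma^{mn}\eta$ alone. Second, the gaugino variation \eqref{eq:O3susy3} itself acquires $\OO(\a)$ corrections, so ``the supersymmetry condition similar to that of $A$'' is no longer the bare instanton equation, and the paper expects the analogous corrections for $\nabla$; your reduction of everything to $(\d_{\nabla^-}\theta)_{mn}\Gamma^{mn}\eta=0$ is therefore only the $\OO(\a^2)$ statement. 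Third, at quartic order the form of the action is not unique --- undetermined $(\textrm{curvature})^4$ terms appear --- so ``its own equation of motion'' is not yet a well-defined equation to verify without external input such as string amplitude computations. These are precisely the obstacles that make the statement a conjecture; the paper's own justification is the $\OO(\a^2)$ analysis plus the observation that the amplitude-determined quartic terms preserve the $\nabla\leftrightarrow A$ symmetry, and it explicitly defers the $\OO(\a^3)$ check to future work.
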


\subsection{The Second Order Theory}
According to Bergshoeff and de Roo \cite{Bergshoeff1989439} the bosonic part of the heterotic action does not receive corrections at this order. Bergshoeff and de Roo used the Hull connection when writing down the action, but we shall be more generic, choosing a connection $\nabla$ that differs from the Hull connection by changes of $\OO(\a)$. The bosonic action then reads
\begin{align}
S=&\int_{M_{10}}e^{-2\phi}\Big[*\mathcal{R}-4\vert\d\phi\vert^2+\frac{1}{2}\vert H\vert^2+\frac{\alpha'}{4}(\tr\:\vert F\vert^2-\tr\:\vert R\vert^2)\Big]+\OO(\a^3)\:,
\label{eq:2action}
\end{align}
where now
\begin{equation}
H=\d B+\frac{\a}{4}\left(\omega_{CS}^A-\omega^\nabla_{CS}\right)+\OO(\a^3)\:,
\end{equation}
and $\nabla=\nabla^-+\OO(\a)$.

The supersymmetry transformations do receive corrections. What these corrections are again depend crucially on which connection is chosen in the action as we will discuss in the next section. Using the Hull connection $\nabla=\nabla^-$, they are given in \cite{Bergshoeff1989439} and read\footnote{\label{foot:note}It should be noted that the specific form of these corrections, where there are no covariant derivatives of the spinor in the $\OO(\a^2)$-correction requires an addition of an extra term of $\OO(\a^2)$ to the fermionic sector action \cite{Bergshoeff1989439}.}
\begin{align}
\delta\psi_M=&\left(\nabla^+_M+\frac{1}{4}\P_M\right)\epsilon\notag\\
\label{eq:spinorsusy1}
=&\left(\nabla^{\hbox{\tiny{LC}}}_M+\frac{1}{8}(\H_M+2\P_M)\right)\epsilon+\OO(\a^3)\\
\label{eq:spinorsusy2}
\delta\lambda=&-\frac{1}{2\sqrt{2}}\left(\slashed\nabla^{\hbox{\tiny{LC}}}\phi+\frac{1}{12}\H+\frac{3}{12}\P\right)\epsilon+\OO(\a^3)\\
\label{eq:spinorsusy3}
\delta\chi=&-\frac{1}{2}F_{MN}\Gamma^{MN}\:\epsilon+\OO(\a^2),
\end{align}
where
\begin{equation}
P_{MAB}=-6\a e^{2\phi}\:\nabla^{+L}(e^{-2\phi}\d H_{LMAB})\:.
\end{equation}
Here $\P_M=P_{MAB}\Gamma^{AB}$ and $\P=P_{MAB}\Gamma^{MAB}$. Here $A, B,\:..$ denote flat indices, while $I, J,\: ..$ denote space-time indices. Note again the reduction in $\a$ for the gauge-field transformation \eqref{eq:spinorsusy3}.

\subsection{Second Order Equations of Motion}
We now derive the equations of motion of the action \eqref{eq:2action}. As the action is the same as the first order action, one might guess that the equations of motion will be the same too. This is not quite correct, and we take a moment to explain why.

When deriving the first order equations of motion, one relies on the lemma of \cite{Bergshoeff1989439}, equation \eqref{eq:lemmabos}, from which it follows that the Hull connection satisfies an equation of motion of its own, whenever the other fields do. As a necessary condition to satisfying the first order equations of motion is that the zeroth order equations of motion are of $\OO(\a)$, the variation of the action with respect to $\nabla^-$ can be ignored as it is of $\OO(\a^2)$. This simplifies matters when deriving the first order equations of motion. At second order however, such terms will have to be included, potentially leading to a more complicated set of equations.

We note that the $\OO(\a^2)$-corrections to the equations of motion come from the variation of the action with respect to $\nabla$. What they are will crutially depend on what connection $\nabla$ is used. Let us write the connection one-form of $\nabla$ as
\begin{equation*}
\Theta=\Theta^-+\theta,
\end{equation*}
where $\Theta^-$ are the connection one-forms of the Hull connection, and $\theta=\OO(\a)$, and depends on the other fields of the theory in some unspecified way. The action then takes the form
\begin{equation}
\label{eq:thetaaction}
S=S[\nabla^-]+\delta_\theta S+\OO(\a^3)\:.
\end{equation}
Let us compute $\delta_\theta S$. We find
\begin{equation*}
\delta_\theta S=\int_{M_{10}}e^{-2\phi}\Big[\delta_\theta H\wedge*H+\frac{\a}{2}\tr\:[\d_{\nabla^-}\theta\wedge*R^-]\Big]\:,
\end{equation*}
where $\d_{\nabla^-}=\d+[\Theta^-,\:]$. Now
\begin{equation*}
\delta_\theta H=\d\delta_\theta B-\frac{\a}{4}\delta_\theta\omega_{CS}^\nabla=\d\delta_\theta B+\frac{\a}{2}\tr\:[\theta\wedge R^-]+\frac{\a}{4}\d\tr\:[\theta\wedge\Theta^-]\:,
\end{equation*}
where $\delta_\theta B$ comes from the non-trivial deformation of $B$ due to the Green-Schwarz mechanism. Note that the Green-Schwarz mechanism is an $\OO(\a)$-effect, which means that $\delta_\theta B=\OO(\a^2)$. Inserting this back into the action, we find
\begin{align*}
\delta_\theta S&=\frac{\a}{2}\int_{M_{10}}e^{-2\phi}\:\tr\:\theta\wedge\Big[e^{2\phi}\:\d_{\nabla^-}(e^{-2\phi}*R^-)-R^-\wedge*H+\Theta^-\wedge e^{2\phi}\:\d(e^{-2\phi}*H)\Big]\\
&-\int_{M_{10}}e^{-2\phi}\delta_\theta B\wedge e^{2\phi}\:\d(e^{-2\phi}*H)\:.
\end{align*}
We can write this as
\begin{equation}
\label{eq:dstheta}
\delta_\theta S=\frac{\a}{2}\int_{M_{10}}e^{-2\phi}\:\tr\:\theta\wedge\textrm{B}_0-\int_{M_{10}}e^{-2\phi}\delta_\theta B\wedge \textrm{H}_0\:,
\end{equation}
since the expression in brackets is proportional to a combination of zeroth order bosonic equations of motion according to \eqref{eq:lemmabos}, while $\textrm{H}_0$ is the zeroth order equation of motion for $H$. It follows from \eqref{eq:dstheta} that the change of connection $\theta$ may be thought of as an $\OO(\a^2)$ field redefinition, as this is precisely how the action gets corrected when we perform such a field redefinition. This is similar to the $\OO(t,\a)$ field redefinitions we described in the previous section. In the same way, it follows that the change of the connection $\theta$ is unphysical. 

Let us next compute the variation of the action \eqref{eq:thetaaction} with respect to the connection $\nabla$, assuming that the first order equations of motion are satisfied. Using the first order equations of motion, in particular
\begin{equation*}
\textrm{H}_0=e^{2\phi}\d(e^{-2\phi}*H)=\OO(\a^2)\:,
\end{equation*}
we find
\begin{align}
\delta_\nabla S\big\vert_{\delta S=\OO(\a^2)}=&\frac{\a}{2}\int_{M_{10}}e^{-2\phi}\:\tr\:\delta\Theta^-\wedge\Big[[\theta,*R^-]+e^{2\phi}\:\d_{\nabla^-}(e^{-2\phi}*\d_{\nabla^-}\theta)\notag\\
&-\d_{\nabla^-}\theta\wedge*H+e^{2\phi}\:\d_{\nabla^-}(e^{-2\phi}*R^-)-R^-\wedge*H\Big]+\OO(\a^3)\:.
\label{eq:deltaSnabla}
\end{align}
Note that any variations depending on $\delta\theta$ drop out of this expression. This is due to \eqref{eq:dstheta} and that $\theta$ is of order $\a$, which implies that variations of the action with respect to $\theta$ are of $\OO(\a^3)$ at the locus of the first order equations of motion. We therefore only need to worry about the $\delta\Theta^-$-part when varying the action with respect to $\nabla$.

Equation \eqref{eq:lemmabos} also guarantees that the expression in \eqref{eq:deltaSnabla} is of $\OO(\a^2)$. The change of the $\OO(\a^2)$ equations of motion depend on what the expression in the brackets is, which again depends on our connection choice. It should be stressed that even though $\theta$ corresponds to a field choice, this does not mean that any field choice will do. {\it We want to choose our fields so that supersymmetry, and in particular the Strominger system, is compatible with the equations of motion.}

Recall that the $\beta$-functions of the $(0,2)$-sigma model correspond to the heterotic supergravity equations of motion. In \cite{Foakes1988335} it was noted that the three-loop $\beta$-function of the gauge connection equal the two-loop $\beta$-function.\footnote{The $n$-loop $\beta$-functions of the sigma models correspond to the $\OO(\a^{n-1})$ supergravity.} That is, the $\beta$-function of the gauge field does not receive corrections at this order, nor should the corresponding supergravity equation of motion. This is consistent with the supergravity point of view \cite{Bergshoeff1989439}. Motivated by this, and guided by the symmetry between $\nabla$ and the gauge connection in the action, it seems natural to choose $\nabla$ so that it satisfies it's own equation of motion
\begin{equation}
\label{eq:nablaeom}
e^{2\phi}\:\d_\nabla(e^{-2\phi}*R)-R\wedge*H=\OO(\a^2),
\end{equation}
at this order. This is exactly the equation one gets when varying the action with respect to $\nabla$, and which is indeed satisfied by the Hull connection at first order. It is easy to see that choosing this connection is in fact equivalent to choosing $\theta$ so that the expression in brackets in \eqref{eq:deltaSnabla} vanishes, modulo higher orders. This again implies that all the first order equations of motion remain the same at $\OO(\a^2)$.

Of course, changing the connection also requires that we change the supersymmetry variations appropriately, in order that the full action remains invariant under supersymmetry transformations at $\OO(\a^2)$. This also relates to how we correct the connection outside of the locus of equations of motion. We will return to this later, when we also consider supersymmetric solutions. We shall see that supersymmetric solutions may be assumed to be solutions of the Strominger system ($\nabla^+\:\eta=0$), without loss of generality. Moreover, they exist if and only if $\nabla$ is an instanton, and in particular \eqref{eq:nablaeom} is satisfied. This is in complete analogy with the gauge connection, as the supersymmetry condition for $A$ is that $F$ remains an instanton at $\OO(\a^2)$ as well.

\subsection{The Hull Connection at $\OO(\a^3)$}
Before we go on to consider the more general connection choices in more detail, let us return to the Hull connection used in \cite{Bergshoeff1989439}. We shall see that choosing this connection severely restricts the allowed supersymmetric solutions. The corrections to the theory in the case of the Hull connection have been worked out in \cite{Bergshoeff1989439} to $\OO(\a^3)$, and we consider this theory to this order.

We look for supersymmetric solutions at $\OO(\a^3)$. As we shall see, insisting upon the Hull connection restricts the allowed supersymmetric solutions. This was also argued in \cite{Ivanov:2009rh} where the first order theory was taken to be exact, which lead to Calabi-Yau solutions as the only consistent solutions.

We will consider the theory up to cubic order in $\a$. At this order, the bosonic action may be given as \cite{Bergshoeff1989439}
\begin{align}
S=&\int_{M_{10}}e^{-2\phi}\Big[*\mathcal{R}-4\vert\d\phi\vert^2+\frac{1}{2}\vert H\vert^2+\frac{\alpha'}{4}(\tr\:\vert F\vert^2-\tr\:\vert R^-\vert^2)\notag\\
&+\frac{3}{2}\a\vert T\vert^2+\frac{\a}{2}T_{IJ}T^{IJ}\Big]+\OO(\a^4),
\label{eq:cubicaction}
\end{align}
where
\begin{align*}
T&=\d H=\frac{\a}{4}\left(\tr\: F\wedge F-\tr\: R^-\wedge R^-\right),\\
T_{IJ}&=\frac{\a}{4}\left(\tr\: F_{IK}{F^K}_J-\tr\: R^-_{IK}{{R^-}^K}_J\right).
\end{align*}
Note again the symmetry between the gauge connection $A$ and the Hull connection $\nabla^-$ in the action above.

The supersymmetry transformations now read
\begin{align}
\label{eq:O3susy1}
\delta\psi_M=&\left(\nabla^+_M+\frac{1}{4}\P_M+\frac{1}{4}\Q_M\right)\epsilon+\OO(\a^4)\:,\\
\label{eq:O3susy2}
\delta\lambda=&-\frac{1}{2\sqrt{2}}\left(\slashed\nabla^{\hbox{\tiny{LC}}}\phi+\frac{1}{12}\H+\frac{3}{12}\P+\frac{3}{12}\Q\right)\epsilon+\OO(\a^4)\:,\\
\delta\chi=&-\frac{1}{2}F_{MN}\slashed F\:\epsilon-\a\bigg[\frac{1}{16}{T_J}^J\slashed F-4T^{LJ}\Gamma^K\Gamma_LF_{JK}\notag\\
\label{eq:O3susy3}
&+\frac{1}{32}\Gamma^{AB}\slashed TF_{AB}\bigg]\epsilon+\OO(\a^3)\:,
\end{align}
where
\begin{align*}
\Q_M=&-24\a^2\:e^{2\phi}\:\nabla^{+I}\left[e^{-2\phi}\:\nabla^+_{[I}(e^{2\phi}\:\nabla^+_{\vert J\vert}e^{-2\phi}\:{T^{J}}_{M]AB]})\right]\Gamma^{AB}\\
&+\a^2\:e^{2\phi}\:\nabla^+_I\bigg[e^{-2\phi}\bigg(\frac{1}{4}{T_J}^J{{R^-_{AB}}^I}_M\Gamma^{AB}\\
&-16T^{LJ}\Gamma^K\Gamma_L{{R^-_{JK}}^I}_M+\frac{1}{8}\Gamma^{AB}\slashed T {{R^-_{AB}}^I}_M\bigg)\bigg]\,,
\end{align*}
and $\Q=\Gamma^M\Q_M$.

Let us now consider supersymmetric solutions. Note that from Proposition \ref{prop:Strom}, by setting $n=2$, we may assume
\begin{equation*}
\nabla^+\:\eta=0\:,
\end{equation*}
without loss of generality. The corresponding supersymmetric solutions are solutions of the Strominger system. It again follows that since the action remains uncorrected at $\OO(\a^2)$, the connection $\nabla$ must still be an instanton at $\OO(\a)$ by Appendix \ref{app:proof}. Insisting on a particular choice of connection may then over constrain the system, as it is not guaranteed that this connection is an instanton. We see how this is true for the particular example of the Hull connection next.

\begin{Theorem}
For compact smooth compactifications, if we insist upon using the Hull connection at $\OO(\a^{n})$, $n\ge2$, we can without loss of generality assume that the first order solution is Calabi-Yau. If we also assume that $\OO(\a)$-corrections are purely geometric, i.e. non-topological, then the second order geometry can be assumed to be Calabi-Yau as well.
\end{Theorem}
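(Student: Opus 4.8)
The plan is to show that insisting on the Hull connection forces the anomaly source $\d H$ to drop an order in $\a$, and then to invoke a compactness argument that rules out torsion at that order. By Proposition \ref{prop:Strom} a compact supersymmetric solution may be assumed to solve the Strominger system, so $\nabla^+\eta=0$, $\d(e^{-2\phi}\Psi)=0$, $X$ is conformally balanced, $H=i(\p-\bp)\omega$, and $F$ obeys \eqref{eq:inst1}. Since the bosonic action receives no correction at $\OO(\a^2)$ \cite{Bergshoeff1989439}, compatibility of these conditions with the equations of motion requires (Appendix \ref{app:proof}) that the connection $\nabla$ entering \eqref{eq:2action} be an instanton at $\OO(\a)$, i.e. $R_{mn}\Gamma^{mn}\eta=\OO(\a^2)$; with $\nabla=\nabla^-$ this becomes a constraint on $R^-$.

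Next I would exploit the standard curvature identity underlying \eqref{eq:curvatureid}, $R^-_{mnpq}=R^+_{pqmn}-\tfrac12(\d H)_{mnpq}$, together with the fact that $\nabla^+\eta=0$ forces the algebra part of $R^+$ to annihilate $\eta$ (integrability of $\nabla^+\eta=0$). Contracting with $\Gamma^{mn}$ gives $R^-_{mn}\Gamma^{mn}\eta=-\tfrac12(\d H)_{mn}\Gamma^{mn}\eta$. Since the Bianchi identity already yields $\d H=\OO(\a)$, the Hull connection is automatically an instanton at $\OO(1)$, and is an instanton at $\OO(\a)$ only if the $\OO(\a)$ part of $(\d H)_{mn}\Gamma^{mn}\eta$ vanishes. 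Decomposing $\d H$, a real $(2,2)$-form, under the $SU(3)$ structure and combining this with the companion condition $R^{-(0,2)}=0$ and the Bianchi identity $\d H=\frac{\a}{4}(\tr F\wedge F-\tr R^-\wedge R^-)$ --- whose right-hand side is a $(2,2)$-form once $F,R^-$ are instantons --- I expect to conclude $\d H=\OO(\a^2)$, equivalently $\p\bp\omega=\OO(\a^2)$.

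The leading-order geometry is then a compact complex threefold that is conformally balanced, carries a holomorphic $(3,0)$-form, and satisfies $\p\bp\omega=0$. By a no-go argument of the type used in \cite{Ivanov:2009rh} --- integrate $e^{-2\phi}H\wedge\d\omega$ over $X$, integrate by parts, and use the conformally balanced condition to identify the result (up to sign) with $\|\p\omega\|^2_{L^2}$, which must therefore vanish --- the torsion vanishes at leading order, so $\d\omega=\OO(\a)$ and $H=\OO(\a)$: the $\OO(1)$ metric is Kähler, and since the canonical bundle is trivial it is Calabi-Yau. For the second statement one iterates: if the $\OO(\a)$ corrections are purely geometric the canonical bundle and the characteristic class appearing in the right-hand side of the Bianchi identity are unchanged, so repeating the previous two steps one order higher forces the $\OO(\a)$-corrected metric to be Kähler, hence Calabi-Yau.

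The main obstacle I anticipate is the middle step: passing from ``the $\Gamma^{mn}\eta$-contraction of $\d H$ vanishes at $\OO(\a)$'' to the honest statement $\d H=\OO(\a^2)$ at the level of forms. This needs the precise curvature identity, a careful $SU(3)$-decomposition of $\Lambda^2 T^*X$ and $\Lambda^4 T^*X$, and the use of both instanton conditions on $R^-$ together with the Bianchi identity; the compactness step is essentially the known vanishing theorem, so the genuinely new content is showing that demanding the Hull connection reproduces its hypothesis.
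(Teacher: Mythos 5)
Your overall strategy matches the paper's: reduce to the Strominger system via Proposition \ref{prop:Strom}, use the instanton condition forced on $\nabla^-$ together with the curvature identity \eqref{eq:curvatureid} to push $\d H$ to higher order, and then use compactness to push $H$ itself to higher order (your integration-by-parts identity relating $\Vert H\Vert^2$ to $\int_X e^{-2\phi}\,\d H\wedge\omega$ is essentially the paper's Hodge-theoretic step with $\d_\phi$ in disguise). The genuine gap is in the order counting. You stop the bootstrap at $\d H=\OO(\a^2)$, which via your compactness argument gives only $\Vert H\Vert^2=\OO(\a^2)$ and hence $H=\OO(\a)$: that is the statement that the \emph{zeroth}-order geometry is Calabi--Yau. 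The theorem's first claim, as the paper interprets it, is that the \emph{first}-order geometry is Calabi--Yau, i.e.\ $H=\OO(\a^2)$, and for that one needs $\d H=\OO(\a^3)$. The paper gets this by iterating once more: once $T=\d H=\OO(\a^2)$ and $T_{mn}=\OO(\a^2)$, the cubic corrections to the action and supersymmetry variations are pushed to $\OO(\a^4)$ at the supersymmetric locus, so the instanton requirement on $\nabla^-$ must hold to $\OO(\a^3)$, whence $\d H=\OO(\a^3)$, $\Vert H\Vert^2=\OO(\a^3)$, and (excluding fractional powers of $\a$) $H=\OO(\a^2)$. Your argument is consistently one order short of the statement, for both halves of the theorem.

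Relatedly, your use of the topological hypothesis is not the one that does the work. The paper's assumption is that $\a$-corrections create no new zero modes of $\Delta_{\d_\phi}$, so its nonzero eigenvalues remain $\OO(1)$; since $\tilde H=e^{-\phi}H$ is $\d_\phi^\dagger$-exact (hence orthogonal to $\ker\Delta_{\d_\phi}$) and $\Delta_{\d_\phi}\tilde H=\OO(\a^3)$, the uniform spectral gap immediately gives $H=\OO(\a^3)$, i.e.\ the second-order geometry is Calabi--Yau. Invariance of the canonical bundle or of the characteristic class in the Bianchi identity does not by itself supply this extra order. On the other hand, you are right to flag the passage from $(\d H)_{mnpq}\Gamma^{mn}\eta=\OO(\a^k)$ to the form-level statement $\d H=\OO(\a^k)$ as requiring an $SU(3)$ decomposition: the paper asserts this equivalence without proof, so your caution there identifies a real (if shared) loose end rather than a defect of your own argument.
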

\begin{proof}
Let us first consider the theory at $\OO(\a^2)$. From Proposition \ref{prop:Strom}, with $n=2$, we can without loss of generality assume that the geometry solves the Strominger system. We then further need to require the instanton condition
\begin{equation*}
R^-_{pq}\Gamma^{pq}\:\eta=\OO(\a^2)\:,
\end{equation*}
for the Hull connection. At this order in $\a$, this is a nontrivial condition. Indeed, using the identity \eqref{eq:curvatureid}, it follows that
\begin{equation*}
\d H_{mnpq}\Gamma^{mn}\:\eta=\OO(\a^2)\;\;\;\;\Leftrightarrow\;\;\;\;\d H=\OO(\a^2).
\end{equation*} 
It follows that
\begin{equation*}
T=T_{mn}=\OO(\a^2)\:.
\end{equation*}
Using these results, we see that the cubic corrections to the supersymmetry conditions become quartic when requiring the $\OO(\a^2)$ supersymmetry conditions and the equations of motion satisfied. We find that a variation of the cubic corrections to the action at the supersymmetric locus are of $\OO(\a^4)$ as well. 

Arguing this way order by order, it then follows that we need
\begin{equation*}
R^-_{mn}\Gamma^{mn}\:\eta=\OO(\a^3),
\end{equation*}
in order for the equations of motion to be satisfied to cubic order. This further gives the requirement
\begin{equation}
\label{eq;eomO3}
\d H=\OO(\a^3)\:.
\end{equation}
We also have by supersymmetry,
\begin{align*}
\tilde H=e^{-\phi}H=-e^{\phi}\:\d^\dagger(e^{-\phi}*\omega)+\OO(\a^4)=-\d_\phi^\dagger(e^{-\phi}*\omega)+\OO(\a^4)\:,
\end{align*}
where we have defined 
\begin{equation*}
\d_\phi=e^{-\phi}\:\d\: e^{\phi}\:.
\end{equation*}
Note that $\d_\phi^2=0$, with corresponding elliptic Laplacian $\Delta_{\d_\phi}$. It follows that any form $\gamma$ has a Hodge decomposition
\begin{equation*}
\gamma=h_{\d_\phi}+\d_\phi\alpha+\d_\phi^\dagger\beta\:,
\end{equation*}
where $h_{\d_\phi}$ is $\d_\phi$-harmonic. From this we get
\begin{equation*}
\tilde H=-\d_\phi^\dagger(e^{-\phi}*\omega)+\OO(\a^4)=\Delta_{\d_\phi}\kappa+\OO(\a^4)\:,
\end{equation*}
for some three-form $\kappa$. Here
\begin{equation*}
\Delta_{\d_\phi}=\d_\phi\d_\phi^\dagger+\d_\phi^\dagger\d_\phi\:.
\end{equation*}
From \eqref{eq;eomO3}, it follows that $\d_\phi\tilde H=\OO(\a^3)$, which in turn implies
\begin{equation}
\label{eq:kerdelphi}
\Delta_{\d_\phi}\tilde H=\OO(\a^3)\:.
\end{equation}
From this it follows that
\begin{align*}
\vert\vert \tilde H\vert\vert^2=(\tilde H,\Delta_{\d_\phi}\kappa)+\OO(\a^4)=(\Delta_{\d_\phi}\tilde H,\kappa)+\OO(\a^4)=\OO(\a^3)\:,
\end{align*}
from which it follows that
\begin{equation*}
H=\OO(\a^2)\:,
\end{equation*}
where we have excluded fractional powers of $\a$ in the $\a$-expansion of $H$. It follows that the first order geometry is Calabi-Yau.

We can go further if we make a mild assumption about the $\a$-corrections. First note that $\ker(\Delta_{\d_\phi})\cong\ker(\Delta_{\d})$, and $\ker(\Delta_{\d})$ is topological. If we assume that $\a$-corrections are small, and in particular do not change the topology of $X$, it follows that $\vert\ker(\Delta_{\d_\phi})\vert$ does not change under $\a$-corrections. In particular, there are no new zero-modes as $\a\rightarrow 0$. From this it follows that  for $\lambda_i\neq0$ we have $\lambda_i=\OO(1)$. From \eqref{eq:kerdelphi} it follows that
\begin{equation*}
H=\OO(\a^3)\:.
\end{equation*}
This in turn implies that $X$ is a Calabi-Yau, both at first and second order in $\a$.
\end{proof}

\subsection{Choosing Other Connections}
\label{sec:differentcon}
We now consider what happens if a different connection, other than the Hull connection is chosen, that is $\theta\neq0$. We work at $\OO(\a^2)$ for the time being, and leave the cubic and higher order corrections for future work. 

As argued in \cite{Bergshoeff1989439}, the higher order corrections to the supersymmetry transformations come from the failure of $(\Theta,\psi_{IJ})$ to transform as a $SO(9,1)$ Yang-Mills multiplet. Under supersymmetry transformations, we have
\begin{align*}
\delta {\Theta_I}^{JK}=&{(\delta\Theta^-+\delta\theta)_I}^{JK}\\
=&\frac{1}{2}\bar\epsilon\:\Gamma_I\psi^{JK}+\OO(\a)\:,\\
\delta\psi_{IJ}=&-\frac{1}{4}R^+_{IJKL}\Gamma^{KL}\:\epsilon\\
=&-\frac{1}{4}\left({R}_{KLIJ}+\frac{1}{2}\d H_{KLIJ}-\d_{\nabla^-}\theta_{KLIJ}\right)\Gamma^{KL}\:\epsilon\notag\\
=&-\frac{1}{4}{R}_{KLIJ}\Gamma^{KL}\:\epsilon+\OO(\a)\:,
\end{align*}
where \eqref{eq:curvatureid} has been used in the second equality of the expression for $\delta\psi_{IJ}$. Note that without the $\a$-effects, the multiplet transforms as a $SO(9,1)$ Yang-Mills multiplet. This is how the symmetry in the action between the gauge connection and tangent bundle connection arises at $\OO(\a)$.

The $\OO(\a)$ correction to the transformation of ${\Theta_I}^{JK}$ depends on how the correction $\theta$ of the connection is defined in terms of the other fields of the theory. This correction is what makes the action fail to be invariant under supersymmetry transformations. However, this failure of the action to be invariant may be absorbed into an $\OO(\a^2)$-redefinition of the bosonic supersymmetry transformations due to \eqref{eq:lemmabos}, as is done in \cite{Bergshoeff1989439} for the case of the Hull connection. 

The same holds for the $\OO(\a)$ correction to $\delta\psi_{IJ}$,
\begin{equation*}
\delta_{\a}(\delta\psi_{IJ})=-\frac{1}{8}\Big(\d H_{KLIJ}-2\d_{\nabla^-}\theta_{KLIJ}\Big)\Gamma^{KL}\:\epsilon\:.
\end{equation*}
This can be absorbed into a redefinition of the supersymmetry transformations of the fermions due to \eqref{eq:lemmaferm}. For the more general connection choice, it turns out that the correction we need only requires a change of the three-form $P$,
\begin{align}
\label{eq:redefP}
P_{MAB}\rightarrow\tilde P_{MAB}=-\a6e^{2\phi}\:\nabla^{+L}e^{-2\phi}\Big(\d H_{LMAB}-2(\d_{\nabla^-}\theta)_{ABLM}\Big)\:,
\end{align}
but otherwise the transformations \eqref{eq:spinorsusy1}-\eqref{eq:spinorsusy3} remain the same. Note also that as the deformation of the connection can again be viewed as an $\OO(\a^2)$ field redefinition, the new supersymmetry algebra is again closed.

We now compactify our theory on a complex three-fold $X$. By the argument given in Proposition \ref{prop:Strom}, with $n=2$, we can assume without loss of generality that
\begin{equation*}
\nabla^+_m\:\eta=\OO(\a^3)\:.
\end{equation*}
By the rewriting of the bosonic action \eqref{eq:6daction}, which we stress holds true at $\OO(\a^2)$, we find that for the equations of motion to hold we need $R=R^-+\d_{\nabla^-}\theta+\OO(\a^2)$ to satisfy the instanton condition, 
\begin{equation}
\label{eq:O2instanton}
R_{mn}\Gamma^{mn}\:\eta=\OO(\a^2)\:.
\end{equation}
Note the similarity between this condition and the supersymmetry condition for the gauge field \eqref{eq:spinorsusy3}.

Supersymmetry now also requires
\begin{equation*}
\tilde P_{mAB}\Gamma^{AB}\:\eta=\OO(\a^3)\;,
\end{equation*}
by \eqref{eq:redefP} and \eqref{eq:spinorsusy1}. Here $A,\:B$ denote flat indices on $X$.  This equation is however trivial, once we know that $R$ is an instanton. Indeed, we have
\begin{align*}
\tilde P_{mAB}\Gamma^{AB}\:\eta=&-6\a\: e^{2\phi}\:\nabla^{+n}e^{-2\phi}\Big(\d H_{nmAB}-2(\d_{\nabla^-}\theta)_{ABnm}\Big)\Gamma^{AB}\:\eta\\
=&12\a\:e^{2\phi}\:\nabla^{+n}e^{-2\phi}\Big(R_{ABnm}\Gamma^{AB}\:\eta\Big)+\OO(\a^3)\\
=&\OO(\a^3)\;,
\end{align*}
where we used \eqref{eq:curvatureid} in the second equality.

It should also be mentioned that the instanton connection solves the $\nabla$-equation of motion \eqref{eq:nablaeom}, as shown in \cite{Gauntlett:2002sc, Gillard:2003jh}. Indeed in dimension six, by the supersymmetry conditions, it follows that
\begin{equation*}
e^{2\phi}\:\d_\nabla(e^{-2\phi}*R)-R\wedge*H=e^{2\phi}\:\d_\nabla\: e^{-2\phi}(*R+R\wedge\omega)\:.
\end{equation*}
As $R$ is both of type $(1,1)$, and primitive, we have the identity $*R=-\omega\wedge R$. It follows that the instanton connection satisfies the $\nabla$-equation of motion, and the first order equations of motion do not receive corrections. 

We have thus gone through the proof of the statements in Theorem \ref{tm:inst}. Next, we want to consider their interpretation and give a discussion of the results. In doing so we also give our reasons for proposing Conjecture \ref{con:higherorder}.

\section{Discussion}
\subsection{Summary of Results}
\label{sec:disc}
In the first order theory, we saw that the connection $\nabla=\nabla^-+t\theta$, where $\theta$ depends on the fields of the theory in some way, should satisfy the instanton condition whenever the solution is supersymmetric of Strominger system type. As shown in e.g. \cite{delaOssa:2014cia}, this condition has an infinitesimal moduli space of the form
\begin{equation}
\label{eq:Mdnabla}
T\M_{\bp_{\nabla^-}}\cong H^{(0,1)}_{\bp_{\nabla^-}}(X,\End(TX))\:,
\end{equation}
where the tangent space is taken {\it at the Hull connection}. At first order, the requirement that the full supergravity action should be invariant under the usual supersymmetry transformations reduces the choice to the Hull connection. Hence, the $t$-deformed theory requires changes to the supersymmetry transformations, and we found what these where. We also saw that the allowed deformation space of connections, for which supersymmetric solutions of the Strominger system exist, was given by \eqref{eq:Mdnabla}. Supersymmetric solutions could also be assumed to be solutions of the Strominger system by Proposition \ref{prop:Strom}. Moreover, by the lemma of Bergshoeff and de Roo \cite{Bergshoeff1989439}, these deformations correspond to infinitesimal $\OO(\a)$ field redefinitions.

Returning to the usual form of the first order supergravity, we saw that at second order the theory can again be corrected appropriately for any $\OO(\a)$-change $\theta$ of the Hull connection $\nabla^-$, corresponding to $\OO(\a^2)$ field redefinitions. Supersymmetric solutions could again be assumed to be solutions of the Strominger system, and the equations of motion are compatible with supersymmetry if and only if $\nabla=\nabla^-+\theta$ satisfies the instanton condition again.

\subsection{Higher orders}
Let us now take a moment to discuss higher orders in $\a$. Note that the condition we find for compatibility between supersymmetry and equations of motion, \eqref{eq:O2instanton}, is exactly the supersymmetry condition we would get from this ``connection sector" if $\nabla$ was part of a dynamical superfield, very much analogous to the gauge sector. Indeed, the fact that $(\nabla^-,\psi_{IJ})$ transforms as an $SO(9,1)$-Yang-Mills multiplet to $\OO(\a)$ is what motivated the construction of the action of \cite{Bergshoeff1989439} in the first place. From the discussion above, it appears that supersymmetric solutions behave as if this where the case, at least for compact solutions including $\OO(\a^2)$. The question then arises what happens at $\OO(\a^3)$ and higher?

It should first be noted that at higher orders, the form of the supergravity action is no longer unique, and undetermined $(\textrm{curvature})^4$-terms appear \cite{Bergshoeff1989439}. The form of these terms may however be determined through other means such as string amplitude calculations \cite{Gross198741, Cai1987279}, which was also used in \cite{Bergshoeff1989439}, and these terms indeed preserve the symmetry between the Lorentz and Yang-Mills sectors.

With this, it therefore seems natural to conjecture that the above structure also survives to higher orders. That is, the natural connection $\nabla$ used to calculate the curvatures should be chosen so that it satisfies an equation of motion similar to that of $A$, whenever the other equations of motion are satisfied. Moreover, for supersymmetric solutions, $\nabla$ should satisfy a supersymmetry condition similar to that of $A$. We also conjecture that, as seen to order $\OO(\a)$, the moduli of this ``supersymmetry condition" are equivalent to field redefinitions, and therefore do not correspond to physical lower energy fields in any sense.  

\subsection{Future directions}
Having reviewed our results, and discussed higher orders in $\a$, there are a few unanswered questions which we would like to look into in the future. Firstly, it would be interesting to check the proposed conjecture to the next order in $\a$. This should not be very difficult, as the cubic theory was laid out in general in \cite{Bergshoeff1989439}, and we only have to repeat their analysis using a more general connection. It should be noted that  at this order, the supersymmetry condition for the gauge field does receive corrections \eqref{eq:O3susy3}, and we expect this to be true for the tangent bundle connection as well. It would be interesting to check whether to cubic or higher order in $\alpha‚Äô$, supersymmetric solutions still satisfy the Strominger system.  As noted in \cite{delaOssa:2014cia}, these solutions can be recast in terms of a holomorphic structure $\bar D$ on a generalised bundle
\begin{equation*}
\Q=T^*X\oplus\End(TX)\oplus\End(V)\oplus TX\:,
\end{equation*}
and it would be interesting to see if this structure survives beyond second order as well. This might be expected to be the case, as it is suggested by the authors in \cite{Coimbra:2014qaa}, where it is argued that the generalised geometric structure introduced on ${\cal Q}$ survives to higher orders.

Next, it would be interesting to return to the first order theory, and consider higher order deformations of the Hull connection. Indeed, in section \ref{sec:FirstOrder} we only considered infinitesimal deformations away from the Hull connection of this theory. That is, we considered the tangent space of the moduli space of connections {\it at the Hull connection}, which we saw corresponded to infinitesimal $\OO(\a)$ field redefinitions. It would be interesting to perform higher order deformations of the connection, i.e. deformations of $\OO(t^2)$ and above, and to see how this relates to obstructions of the corresponding deformation theory. Moreover, do such ``finite" deformations still correspond to field redefinitions? 

It would also be interesting to consider our findings in relation to the sigma model. In terms of the first order sigma model, it was pointed out in \cite{Sen1986289} that changing the connection $\nabla$ corresponds to $\OO(\a)$ field redefinitions, consistent with the findings of the present paper. Requiring world-sheet conformal invariance, i.e. the ten-dimensional equations of motion, in addition to space-time supersymmetry, puts conditions on the connection. As we have seen, and as was first noted in \cite{Hull1986357}, it is sufficient to use the Hull connection at first order. This connection was also necessary modulo field redefinitions. We found that the allowed field redefinitions correspond to the moduli space \eqref{eq:Mdnabla}, and it would be interesting to see if this moduli space can be retrieved from the sigma model point of view as well.

At next order, we found that the Hull connection was not a good field choice, provided one wants supersymmetric solutions to the Strominger system. Still, we found the necessary and sufficient condition for compatibility was that $\nabla$ should satisfy the instanton condition. Moreover, $\nabla$ is related to the Hull connection by a corresponding $\OO(\a^2)$ field redefinition. But for supersymmetric solutions of the Strominger system, the Hull connection lead to too stringent constraints on the geometry, leading us back to Calabi-Yau. It would be interesting to investigate this further from a sigma-model point of view. In particular, it would be interesting to see what the more ``natural field choices", i.e. connections satisfying the instanton condition, look like in this picture.

\section*{Acknowledgments}
We would like to thank Philip Candelas, James Gray, Jock McOrist, Stefan Groot Nibbelink, Savdeep Sethi, Nick Halmagyi, Dan Israel, Atish Dabholkar for useful conversations during the completion of this work. 
ES would like to thank LPTHE for hospitality while the final stages of this work was carried through.  
XD's research is supported in part by the EPSRC grant BKRWDM00. 
ES's research was supported by the Clarendon Scholarship of OUP and a Balliol College Dervorguilla Scholarship. Part of this work was also made in the ILP LABEX (under reference ANR-10-LABX-63), and was supported by French state funds managed by the ANR within the Investissements d’Avenir programme under reference ANR-11-IDEX-0004-02.

\appendix

\section{Compactification and $SU(3)$-structures}
\label{app:comp}
In this Appendix, we review compactifications of heterotic supergravity on spaces of type $M_4\times X$ where $M_4$ is four-dimensional Minkowski space, and $X$ is compact. We focus on the topological aspects of the compactification, leaving the discussion of supersymmetry to the main part of the paper.

The ten-dimensional geometry is now postulated to have the form of a direct product,
\begin{equation*}
M_{10}=M_4\times X,
\end{equation*}
where $M_4$ is four-dimensional space-time, and $X$ is a compact six-dimensional internal space. We will use small roman indices $(m,n,p,..)$ to denote indices on $X$, and greek indices to denote indices on $M_4$.

The spinor $\epsilon$ decomposes as
\begin{equation}
\label{eq:decompspinor}
\epsilon=\rho\otimes\eta,
\end{equation}
where $\rho$ is a four-dimensional space-time spinor, while $\eta$ is a spinor on $X$. This spinor defines an $SU(3)$-structure with two- and three-forms $\omega$ and $\Psi$ on $X$ given by
\begin{align*}
\omega_{mn}&=-i\eta^\dagger\gamma_{mn}\eta\:,\\
\Psi_{mnp}&=\eta^T\gamma_{mnp}\eta\:,
\end{align*}
where ${\gamma^m}$ are six-dimensional gamma-matrices. These forms satisfy the usual $SU(3)$-structure identities
\begin{equation*}
\omega\wedge\Psi=0,\;\;\;\;\frac{i}{\vert\vert\Psi\vert\vert^2}\Psi\wedge\bar\Psi=\frac{1}{6}\omega\wedge\omega\wedge\omega.
\end{equation*}
The three-form $\Psi$ may also be used to define an almost complex structure $J$ on $X$ by
\begin{equation}
\label{eq:complexstr}
{J_m}^n=\frac{{I_m}^n}{\sqrt{-\frac{1}{6}\tr I^2}},
\end{equation}
where the endomorphism $I$ is given by
\begin{equation*}
{I_m}^n=(\textrm{Re}\Psi)_{mpq}(\textrm{Re}\Psi)_{rst}\epsilon^{npqrst}.
\end{equation*}
The normalization in \eqref{eq:complexstr} is needed so that $J^2=-1$. Note also that the complex structure $J$ is independent of rescalings of $\Psi$.

A general $SU(3)$-structure is parameterised by five torsion classes, \newline $(W_0, W_1^\omega,W_1^\Psi,W_2,W_3)$ where \cite{0444.53032, 1024.53018, LopesCardoso:2002hd, Gauntlett:2003cy}
\begin{align*}
\d\omega&=-\frac{12}{\vert\vert\Psi\vert\vert^2}\textrm{Im}(W_0\bar\Psi)+W_1^\omega\wedge\omega+W_3\\
\d\Psi&=W_0\:\omega\wedge\omega+W_2\wedge\omega+\bar W_1^\Psi\wedge\Psi\:.
\end{align*}
Here $W_0$ is a complex function, $W_2$ is a primitive $(1,1)$-form, $W_3$ is real and primitive and of type $(1,2)+(2,1)$. Also, $W_1^\omega$ is a real one-form, while $W_1^\Psi$ is a $(1,0)$-form. These are known as the Lee-forms of $\omega$ and $\Psi$ respectively, and they are given by
\begin{align*}
W_1^\omega&=\frac{1}{2}\omega\lrcorner\d\omega\:,\\
W_1^\Psi&=\frac{1}{\vert\vert\Psi\vert\vert^2}\Psi\lrcorner\d\bar\Psi\:.
\end{align*}
It should be noted that $W_2=W_0=0$ is equivalent to the vanishing of the Nijenhaus tensor, and therefore equivalent to $X$ being complex.

\section{Proof of Instanton Condition}
\label{app:proof}
In this Appendix, we repeat the proof of \cite{delaOssa:2014cia} showing that supersymmetric solutions of the Strominger system, and the equations of motion are compatible if and only if $\nabla$ is of instanton type. We consider the theory including $\OO(\a^2)$ terms. We note that this is a special case of a more general proof, which appeared in \cite{Held:2010az}.

Recall first that the second order bosonic action is the same as the first order action \cite{Bergshoeff1989439}. According to \cite{Cardoso:2003af}, the six-dimensional part of the action \eqref{eq:action} may be written in terms of $SU(3)$-structure forms as
\begin{align}
\label{eq:6daction}
S_6=&\frac{1}{2}\int_{X}e^{-2\phi}\Big[-4\vert\d\phi-W_1^\omega\vert^2+\omega\wedge\omega\wedge\mathcal{\hat R}+\vert H-e^{2\phi}*\d(e^{-2\phi}\omega)\vert^2\Big]\notag \\
&-\frac{1}{4}\int\d ^6y\sqrt{g_6}{N_{mn}}^pg^{mq}g^{nr}g_{ps}{N_{nq}}^s\notag\\
&-\frac{\alpha'}{2}\int_{X}e^{-2\phi}\Big[\tr\vert F^{(2,0)}\vert^2+\tr\vert F^{(0,2)}\vert^2+\frac{1}{4}\tr\vert F_{mn}\omega^{mn}\vert^2\Big]\notag\\
&+\frac{\alpha'}{2}\int_{X}e^{-2\phi}\Big[(\tr\vert{R}^{(2,0)}\vert^2+\tr\vert {R}^{(0,2)}\vert^2+\frac{1}{4}\tr\vert R_{mn}\omega^{mn}\vert^2\Big]+\OO(\a^3)\:,
\end{align}
where the Bianchi Identity has been applied. $\mathcal{\hat R}$ is now the Ricci-form of the unique connection $\hat\nabla$ with totally antisymmetric torsion, for which the complex structure is parallel. For supersymmetric solutions of the Strominger system, the connection $\nabla^+$, see equation \eqref{eq:connsymb}, coincides with $\hat\nabla$, which is known as the Bismut connection in the mathematics literature. The Ricci-form is
\begin{equation*}
\mathcal{\hat R}=\frac{1}{4}\hat R_{pqmn}\omega^{mn}\d x^p\wedge\d x^q,
\end{equation*}
while ${N_{mn}}^p$ is the Nijenhaus tensor for this almost complex structure. Note that
\begin{equation*}
\mathcal{\hat R}=0
\end{equation*}
is an integrability condition for supersymmetry.

Performing a variation of the action at the supersymmetric locus, we find that most of the terms vanish. The only surviving terms are
\begin{align}
\delta S_6&=\frac{1}{2}\int_{X}e^{-2\phi}\omega\wedge\omega\wedge\delta\mathcal{\hat R}\notag\\
\label{eq:6dvary}
&+\frac{\alpha'}{2}\delta\int_{X}e^{-2\phi}\Big[(\tr\vert{R}^{(2,0)}\vert^2+\tr\vert {R}^{(0,2)}\vert^2+\frac{1}{4}\tr\vert R_{mn}\omega^{mn}\vert^2\Big]+\OO(\a^3).
\end{align}
In \cite{Cardoso:2003af} it is shown that $\delta\mathcal{\hat R}$ is exact, and therefore the first term vanishes using supersymmetry by an integration by parts. If the equations of motion are to be satisfied to the order we work at, we therefore find
\begin{equation*}
R_{mn}\Gamma^{mn}\:\eta=\OO(\a^2)\:,
\end{equation*}
which is equivalent to the instanton condition. Note the reduction in orders of $\a$ due to the factor of $\a$ in front of the curvature terms in the action.

\section{The Hull connection}
\label{app:Hull}
For completeness, we also repeat the argument of \cite{Martelli:2010jx} that the Hull connection does indeed satisfy the instanton condition for the $\OO(\a)$-theory, whenever we have supersymmetry. We work in ten dimensions in this Appendix.

It is easy to show that
\begin{equation}
\label{eq:curvatureid}
R^+_{MNPQ}-R^-_{PQMN}=\frac{1}{2}\d H_{MNPQ}.
\end{equation}
At zeroth order we get
\begin{equation*}
R^+_{MNPQ}=R^-_{PQMN}+\OO(\a),
\end{equation*}
by the Bianchi Identity. Contracting both sides with $\Gamma^{PQ}\:\eta$, and using 
\begin{equation*}
R^+_{MNPQ}\Gamma^{PQ}\:\epsilon=\OO(\a^2)
\end{equation*}
at the supersymmetric locus, we find
\begin{equation*}
R^-_{PQ}\Gamma^{PQ}\:\epsilon=\OO(\a)\:,
\end{equation*}
as required.

\providecommand{\href}[2]{#2}\begingroup\raggedright\endgroup


\begin{thebibliography}{10}

\bibitem{delaOssa:2014cia}
X.~de~la Ossa and E.~E. Svanes, {\it {Holomorphic Bundles and the Moduli Space
of N=1 Heterotic Compactifications}},
\href{http://xxx.lanl.gov/abs/1402.1725}{{\tt arXiv:1402.1725}}.

\bibitem{Strominger1986}
A.~{Strominger}, {\it {Superstrings with torsion}}, {\em Nuclear Physics B}
{\bf 274} (Sept., 1986) 253--284.

\bibitem{Hull1986357}
C.~Hull, {\it Compactifications of the heterotic superstring}, {\em Physics
Letters B} {\bf 178} (1986), no.~4 357 -- 364.

\bibitem{Lust:1986ix}
D.~L\"ust, {\it {Compactification of Ten-dimensional Superstring Theories Over
Ricci Flat Coset Spaces}}, {\em Nucl.Phys.} {\bf B276} (1986) 220.

\bibitem{Anderson:2014xha}
L.~B. Anderson, J.~Gray, and E.~Sharpe, {\it {Algebroids, Heterotic Moduli
Spaces and the Strominger System}}, {\em JHEP} {\bf 1407} (2014) 037,
[\href{http://xxx.lanl.gov/abs/1402.1532}{{\tt arXiv:1402.1532}}].

\bibitem{Hull1986187}
C.~Hull and P.~Townsend, {\it World-sheet supersymmetry and anomaly
cancellation in the heterotic string}, {\em Physics Letters B} {\bf 178}
(1986), no.~2Ð3 187 -- 192.

\bibitem{Sen1986289}
A.~Sen, {\it (2, 0) supersymmetry and space-time supersymmetry in the heterotic
string theory}, {\em Nuclear Physics B} {\bf 278} (1986), no.~2 289 -- 308.

\bibitem{0264-9381-4-6-027}
P.~S. Howe and G.~Papadopoulos, {\it Anomalies in two-dimensional
supersymmetric nonlinear sigma models}, {\em Classical and Quantum Gravity}
{\bf 4} (1987), no.~6 1749.

\bibitem{Melnikov:2012cv}
I.~V. Melnikov, R.~Minasian, and S.~Theisen, {\it {Heterotic flux backgrounds
and their IIA duals}}, \href{http://xxx.lanl.gov/abs/1206.1417}{{\tt
arXiv:1206.1417}}.

\bibitem{Melnikov:2012nm}
I.~V. Melnikov, C.~Quigley, S.~Sethi, and M.~Stern, {\it {Target Spaces from
Chiral Gauge Theories}}, {\em JHEP} {\bf 1302} (2013) 111,
[\href{http://xxx.lanl.gov/abs/1212.1212}{{\tt arXiv:1212.1212}}].

\bibitem{Hull198651}
C.~Hull, {\it Anomalies, ambiguities and superstrings}, {\em Physics Letters
B} {\bf 167} (1986), no.~1 51 -- 55.

\bibitem{Becker:2009df}
K.~Becker and S.~Sethi, {\it {Torsional Heterotic Geometries}}, {\em
Nucl.Phys.} {\bf B820} (2009) 1--31,
[\href{http://xxx.lanl.gov/abs/0903.3769}{{\tt arXiv:0903.3769}}].

\bibitem{Melnikov:2014ywa}
I.~V. Melnikov, R.~Minasian, and S.~Sethi, {\it {Heterotic fluxes and
supersymmetry}}, \href{http://xxx.lanl.gov/abs/1403.4298}{{\tt
arXiv:1403.4298}}.

\bibitem{Witten:1986kg}
L.~Witten and E.~Witten, {\it {Large Radius Expansion of Superstring
Compactifications}}, {\em Nucl.Phys.} {\bf B281} (1987) 109.

\bibitem{Bergshoeff:1988nn}
E.~Bergshoeff and M.~de~Roo, {\it {Supersymmetric Chern-simons Terms in
Ten-dimensions}}, {\em Phys.Lett.} {\bf B218} (1989) 210.

\bibitem{Bergshoeff1989439}
E.~Bergshoeff and M.~de~Roo, {\it The quartic effective action of the heterotic
string and supersymmetry}, {\em Nuclear Physics B} {\bf 328} (1989), no.~2
439 -- 468.

\bibitem{Gillard:2003jh}
J.~Gillard, G.~Papadopoulos, and D.~Tsimpis, {\it {Anomaly, fluxes and (2,0)
heterotic string compactifications}}, {\em JHEP} {\bf 0306} (2003) 035,
[\href{http://xxx.lanl.gov/abs/hep-th/0304126}{{\tt hep-th/0304126}}].

\bibitem{Anguelova:2010ed}
L.~Anguelova, C.~Quigley, and S.~Sethi, {\it {The Leading Quantum Corrections
to Stringy Kahler Potentials}}, {\em JHEP} {\bf 1010} (2010) 065,
[\href{http://xxx.lanl.gov/abs/1007.4793}{{\tt arXiv:1007.4793}}].

\bibitem{Bedoya:2014pma}
O.~A. Bedoya, D.~Marques, and C.~Nunez, {\it {Heterotic $\alpha$'-corrections
in Double Field Theory}}, \href{http://xxx.lanl.gov/abs/1407.0365}{{\tt
arXiv:1407.0365}}.

\bibitem{Hohm:2014eba}
O.~Hohm and B.~Zwiebach, {\it {Green-Schwarz mechanism and $\alpha'$-deformed
Courant brackets}}, \href{http://xxx.lanl.gov/abs/1407.0708}{{\tt
arXiv:1407.0708}}.

\bibitem{Hohm:2014xsa}
O.~Hohm and B.~Zwiebach, {\it {Double Field Theory at Order $\alpha'$}},
\href{http://xxx.lanl.gov/abs/1407.3803}{{\tt arXiv:1407.3803}}.

\bibitem{Coimbra:2014qaa}
A.~Coimbra, R.~Minasian, H.~Triendl, and D.~Waldram, {\it {Generalised geometry
for string corrections}}, \href{http://xxx.lanl.gov/abs/1407.7542}{{\tt
arXiv:1407.7542}}.

\bibitem{Ivanov:2009rh}
S.~Ivanov, {\it {Heterotic supersymmetry, anomaly cancellation and equations of
motion}}, {\em Phys.Lett.} {\bf B685} (2010) 190--196,
[\href{http://xxx.lanl.gov/abs/0908.2927}{{\tt arXiv:0908.2927}}].

\bibitem{Martelli:2010jx}
D.~Martelli and J.~Sparks, {\it {Non-Kahler heterotic rotations}}, {\em
Adv.Theor.Math.Phys.} {\bf 15} (2011) 131--174,
[\href{http://xxx.lanl.gov/abs/1010.4031}{{\tt arXiv:1010.4031}}].

\bibitem{Harland:2011zs}
D.~Harland and C.~Nolle, {\it {Instantons and Killing spinors}}, {\em JHEP}
{\bf 1203} (2012) 082, [\href{http://xxx.lanl.gov/abs/1109.3552}{{\tt
arXiv:1109.3552}}].

\bibitem{Gemmer:2012pp}
K.-P. Gemmer, A.~S. Haupt, O.~Lechtenfeld, C.~Nšlle, and A.~D. Popov, {\it
{Heterotic string plus five-brane systems with asymptotic $\mathrm{AdS}_3$}},
{\em Adv.Theor.Math.Phys.} {\bf 17} (2013) 771--827,
[\href{http://xxx.lanl.gov/abs/1202.5046}{{\tt arXiv:1202.5046}}].

\bibitem{Chatzistavrakidis:2012qb}
A.~Chatzistavrakidis, O.~Lechtenfeld, and A.~D. Popov, {\it {Nearly K\'ahler
heterotic compactifications with fermion condensates}}, {\em JHEP} {\bf
1204} (2012) 114, [\href{http://xxx.lanl.gov/abs/1202.1278}{{\tt
arXiv:1202.1278}}].

\bibitem{Klaput:2012vv}
M.~Klaput, A.~Lukas, C.~Matti, and E.~E. Svanes, {\it {Moduli Stabilising in
Heterotic Nearly K\'ahler Compactifications}}, {\em JHEP} {\bf 1301} (2013)
015, [\href{http://xxx.lanl.gov/abs/1210.5933}{{\tt arXiv:1210.5933}}].

\bibitem{Haupt:2014ufa}
A.~S. Haupt, O.~Lechtenfeld, and E.~T. Musaev, {\it {Order $\alpha'$ heterotic
domain walls with warped nearly K\"ahler geometry}},
\href{http://xxx.lanl.gov/abs/1409.0548}{{\tt arXiv:1409.0548}}.

\bibitem{Gauntlett:2001ur}
J.~P. Gauntlett, N.~Kim, D.~Martelli, and D.~Waldram, {\it {Five-branes wrapped
on SLAG three cycles and related geometry}}, {\em JHEP} {\bf 0111} (2001)
018, [\href{http://xxx.lanl.gov/abs/hep-th/0110034}{{\tt hep-th/0110034}}].

\bibitem{Held:2010az}
J.~Held, D.~Lust, F.~Marchesano, and L.~Martucci, {\it {DWSB in heterotic flux
compactifications}}, {\em JHEP} {\bf 1006} (2010) 090,
[\href{http://xxx.lanl.gov/abs/1004.0867}{{\tt arXiv:1004.0867}}].

\bibitem{Yau87}
S.-T. Yau and J.~Li, {\it {Hermitian-Yang-Mills connections on non-Kahler
manifolds}}, {\em World Scient. Publ. London} (1987) 560--573.

\bibitem{donaldson1985anti}
S.~K. Donaldson, {\it Anti self-dual yang-mills connections over complex
algebraic surfaces and stable vector bundles}, {\em Proceedings of the
London Mathematical Society} {\bf 50} (1985), no.~1 1--26.

\bibitem{uhlenbeck1986existence}
K.~Uhlenbeck and S.-T. Yau, {\it On the existence of hermitian-yang-mills
connections in stable vector bundles}, {\em Communications on Pure and
Applied Mathematics} {\bf 39} (1986), no.~S1 S257--S293.

\bibitem{MR2665168}
D.~Huybrechts and M.~Lehn, {\em The geometry of moduli spaces of sheaves}.
\newblock Cambridge Mathematical Library. Cambridge University Press,
Cambridge, second~ed., 2010.

\bibitem{Andriot:2011iw}
D.~Andriot, {\it {Heterotic string from a higher dimensional perspective}},
{\em Nucl.Phys.} {\bf B855} (2012) 222--267,
[\href{http://xxx.lanl.gov/abs/1102.1434}{{\tt arXiv:1102.1434}}].

\bibitem{Foakes1988335}
A.~Foakes, N.~Mohammedi, and D.~Ross, {\it Three-loop ?-functions for the
superstring and heterotic string}, {\em Nuclear Physics B} {\bf 310} (1988),
no.~2 335 -- 354.

\bibitem{Gauntlett:2002sc}
J.~P. Gauntlett, D.~Martelli, S.~Pakis, and D.~Waldram, {\it {G structures and
wrapped NS5-branes}}, {\em Commun.Math.Phys.} {\bf 247} (2004) 421--445,
[\href{http://xxx.lanl.gov/abs/hep-th/0205050}{{\tt hep-th/0205050}}].

\bibitem{Gross198741}
D.~J. Gross and J.~H. Sloan, {\it The quartic effective action for the
heterotic string}, {\em Nuclear Physics B} {\bf 291} (1987), no.~0 41 -- 89.

\bibitem{Cai1987279}
Y.~Cai and C.~NÃºÃ±ez, {\it Heterotic string covariant amplitudes and
low-energy effective action}, {\em Nuclear Physics B} {\bf 287} (1987),
no.~0 279 -- 303.

\bibitem{0444.53032}
A.~Gray and L.~M. Hervella, {\it {The sixteen classes of almost Hermitian
manifolds and their linear invariants}}, {\em Ann. Mat. Pura Appl., IV.
Ser.} {\bf 123} (1980) 35--58.

\bibitem{1024.53018}
S.~Chiossi and S.~Salamon, ``{The intrinsic torsion of $SU(3)$ and $G_2$
structures.}.'' {Gil-Medrano, Olga (ed.) et al., Differential geometry.
Proceedings of the international conference held in honour of the 60th
birthday of A. M. Naveira, Valencia, Spain, July 8-14, 2001. Singapore: World
Scientific. 115-133 (2002).}

\bibitem{LopesCardoso:2002hd}
G.~Lopes~Cardoso, G.~Curio, Dall'Agata, and D.~G, L{\"u}st, {\it {Non-Kaehler
string backgrounds and their five torsion classes}}, {\em Nucl. Phys.} {\bf
B652} (2003) 5--34, [\href{http://xxx.lanl.gov/abs/hep-th/0211118}{{\tt
hep-th/0211118}}].

\bibitem{Gauntlett:2003cy}
J.~P. Gauntlett, D.~Martelli, and D.~Waldram, {\it {Superstrings with intrinsic
torsion}}, {\em Phys.Rev.} {\bf D69} (2004) 086002,
[\href{http://xxx.lanl.gov/abs/hep-th/0302158}{{\tt hep-th/0302158}}].

\bibitem{Cardoso:2003af}
G.~Lopes~Cardoso, G.~Curio, G.~Dall'Agata, and D.~Lust, {\it {BPS action and
superpotential for heterotic string compactifications with fluxes}}, {\em
JHEP} {\bf 0310} (2003) 004,
[\href{http://xxx.lanl.gov/abs/hep-th/0306088}{{\tt hep-th/0306088}}].

\end{thebibliography}
\end{document}